\begin{document}

\newtheorem{theorem}{Theorem}
\newtheorem*{theorem*}{Theorem}
\newtheorem{lemma}{Lemma}
\newtheorem{remark}{Remark}
\newtheorem{proposition}{Proposition}
\newtheorem{definition}{Definition}
\newtheorem*{proposition*}{Proposition}

\newcommand{\tr}{\operatorname{tr}}
\newcommand{\Hess}{\operatorname{Hess}}

\newcommand\norm[1]{\left\lVert#1\right\rVert}

\newcommand{\comment}[1]{}

\numberwithin{equation}{section}

\title[Hyperbolic Einstein Constraint Equations and the Penrose Inequality]{Asymptotically Hyperbolic Einstein Constraint Equations with Apparent Horizon Boundary and the Penrose Inequality for Perturbations of Schwarzschild-AdS}

\author{Marcus Khuri}
\address{Department of Mathematics, Stony Brook University, 
Stony Brook, NY 11794, USA}
\email{khuri@math.sunysb.edu}

\author{Jaros\l aw Kopi\'nski}
\address{Center for Theoretical Physics, Polish Academy of Sciences,
Warsaw, Poland}
\email{jkopinski@cft.edu.pl}
\thanks{M. Khuri acknowledges the support of NSF Grant DMS-2104229, and Simons Foundation Fellowship 681443. J. Kopi\'nski acknowledges support from the Norwegian Financial Mechanism 2014-2021 (UMO-2019/34/H/ST1/00636), and a Ko\'sciuszko Foundation Grant.}

\begin{abstract}
We prove the existence of asymptotically hyperbolic solutions to the vacuum Einstein constraint equations with a marginally outer trapped boundary of positive mean curvature, using the constant mean curvature conformal method. As an application of this result, we verify the Penrose inequality for certain perturbations of Schwarzschild Anti-de Sitter black hole initial data.
\end{abstract}

\maketitle

\section{Introduction}
\label{sec1}

An initial data set for the vacuum Einstein equations with cosmological constant $\Lambda<0$ consists of a triple $(M, \hat{g}, \hat{k})$, where $M$ is a 3-dimensional Riemannian manifold with metric $\hat{g}$ and $\hat{k}$ is a symmetric 2-tensor that satisfy
\begin{equation}\label{hamcon}
R_{\hat{g}} + ( tr_{\hat{g}}\hat{k} )^2  - |\hat{k}|^2_{\hat{g}} + 2 |\Lambda|= 0, \quad\quad\quad
\mathrm{div}_{\hat{g}} \hat{k} - d ( \mathrm{tr}_{\hat{g}} \hat{k} )= 0 ,
\end{equation}
in which $R_{\hat{g}}$ denotes scalar curvature. These are the constraint equations that represent vanishing energy and momentum densities of the matter fields, and arise from traces of the Gauss-Codazzi system when the data are viewed as embedded in spacetime. In order to study isolated gravitating systems, it will be assumed that the data are asymptotically hyperbolic in the sense that the extrinsic curvature $\hat{k}$ falls-off, and the metric $\hat{g}$ asymptotes to the hyperbolic metric in an asymptotic end. Constructions of asymptotically hyperbolic initial data sets have previously been studied by several authors including Andersson-Chru\'sciel \cite{andersson_chrusciel}, Gicquaud \cite{gicquaud, sakovich_ncmc}, and Sakovich \cite{sakovich}. See also the recent related work of Allen-Lee-Maxwell \cite{maxwell_spaces}.

In the current paper, we will study the problem of constructing asymptotically hyperbolic initial data with apparent horizon, or rather marginally outer trapped surface (MOTS), boundary conditions in which the mean curvature is positive. Recall that a closed 2-sided surface $\Sigma\subset M$ is future trapped if its null expansion satisfies $\theta_+ =\hat{h}+\mathrm{tr}_{\Sigma}\hat{k}<0$, where $\hat{h}$ denotes mean curvature with respect to the unit normal pointing towards the asymptotic end. From the spacetime perspective, the null expansion represents the mean curvature in null directions, and thus measures the rate of change of area of shells of light emanating from the surface in the outward future direction. The trapped condition is interpreted as signifying a strong gravitational field. Moreover, MOTS are defined by the equation $\theta_+ =0$ and arise as the boundary of trapped regions \cite{AnderssonMetzger,Eichmair}. These surfaces may be interpreted as quasi-local versions of the event horizon within initial data, since they depend only on the local geometry of a slice whereas the event horizon requires global knowledge of the spacetime. Previous work centered on initial data construction with MOTS boundary, in the asymptotically flat context, has been carried out by Bowen-York \cite{bowen_york}, Dain \cite{dain}, Holst-Meier \cite{holst}, Maxwell \cite{maxwell,maxwell2}, and Thornburg \cite{thornburg} among others.

A standard approach to solving the Einstein constraint equations is the conformal method of Lichnerowicz \cite{lichnerowicz}, Choquet-Bruhat and York \cite{chby}, where (in vacuum) the conformal metric, the divergence-free and traceless part of the `conformal extrinsic curvature', and the mean curvature of the slice are treated as free data. This reduces the constraint equations to a coupled elliptic system, where the unknowns are the conformal factor and the remaining portion of the conformal extrinsic curvature. Moreover, the mean curvature serves as a coupling function in the reduced system, so that in the CMC (constant mean curvature) case the Hamiltonian and momentum constraints decouple and can be solved independently. Initial data containing black holes fits well into this approach. The assumption that the boundary is a marginally outer trapped surface can be encoded in the form of a Robin-type boundary condition for the conformal factor. An extended discussion of results and approaches to the conformal method can be found in reviews by Bartnik-Isenberg \cite{BartnikIsenberg}, and Carlotto \cite{carlotto}.

As an application of our existence result, concerning asymptotically hyperbolic initial data admitting a MOTS boundary with positive mean curvature, we will confirm the Penrose inequality for certain perturbations
of Schwarzschild-AdS. The Penrose inequality is a precise lower bound for the total mass of an initial data set
in terms of the (appropriately defined) surface area of black holes contained within it. Although it was originally
conjectured in the asymptotically flat setting \cite{penr_ns}, where it was proven in time-symmetry by Bray \cite{bray} and Huisken-Ilmanen \cite{huisken_ilmanen}, the Penrose inequality has been proposed with two different
versions in the asymptotically hyperbolic context \cite{braychrusciel,wang}. Namely, one form of the inequality is suited for asymptotically hyperboloidal slices of asymptotically flat spacetimes, while the other is tailored for asymptotically
totally geodesic slices of asymptotically AdS spacetimes; it is the latter that will be studied here. Relatively little is known about the hyperbolic versions of the Penrose inequality outside of spherical symmetry \cite{husain}, the graphical case 
\cite{dahl_sakovich,lima_girao}, and time-symmetric perturbations of Schwarzschild-AdS \cite{ambrosio}. See also the results of \cite[Theorem 1.3]{ahk} and \cite{neves} concerning asymptotically locally hyperbolic initial data. Moreover, it is known that the naive approach using inverse mean curvature flow does not succeed \cite{neves2}, however when coupled with a Jang-type equation the desired inequality follows \cite{mk_cha,chakhurisakovich} assuming existence for a coupled system of equations.

This paper is organized as follows. In the next section the main results will be stated, while in Section \ref{prel} we make the appropriate definitions, set notation, and review various details of the conformal method. Section \ref{sec4} is dedicated to a mean curvature estimate, and in Section \ref{secex} we construct appropriate barriers for application of the method of sub/super solutions to solve the Lichnerowicz equation. The purpose of Section \ref{secpenr} is to record properties of initial data with a MOTS boundary that are conformal to Schwarzschild-AdS, in preparation for studying the Penrose inequality in a perturbative regime. Lastly, in Section \ref{secpenr2} we establish the spacetime Penrose inequality for a class of conformal perturbations of Schwarzschild-AdS data.


\section{Statement of Results}
\label{sec2}

The main results of this work consist of two theorems. The first is a statement concerning the existence of vacuum asymptotically hyperbolic maximal initial data having a MOTS boundary of positive mean curvature. This is established with the conformal method, and comes with an assumption that the boundary mean curvature with respect to the seed metric $g$ is nonnegative and bounded above by a constant $\mathcal{C}$ depending on the boundary normal injectivity radius $T_0$, the sectional curvature $\mathrm{K}$ of $g$, and the cosmological constant $\Lambda$. Namely
\begin{equation} \label{h_bound_introduction}
\max \displaylimits_{\partial M} h < \mathcal{C} \left(T_0,  \min \displaylimits_{M_{T_0}} \mathrm{K}, \max \displaylimits_{M_{T_0}} \mathrm{K}^+, \Lambda \right),
\end{equation}
where $\mathrm{K}^+$ is the positive part of $\mathrm{K}$, and $M_{T_0}$ denotes the set of points whose distance from the boundary is not larger than $T_0$. While the precise form of this constant is given in Proposition \ref{existence}, here we note that $\mathcal{C}$ reduces to $\min\{ 4 ,4  |\Lambda| \}$ for boundaries with an infinite injectivity radius in manifolds of nonpositive curvature. Another hypothesis of the result is that the seed Riemannian manifold is asymptotically hyperbolic of constant scalar curvature, and with a boundary of positive mean curvature.
However, such seeds are readily available as they may be obtained by conformal deformation from essentially any asymptotically hyperbolic manifold \cite[Theorem 1.1]{gicquaud}. This hypothesis may be compared with a similar one in the asymptotically flat setting, where in \cite[Corollary 1]{maxwell} seed data are produced by conformal change to zero scalar curvature and small positive boundary mean curvature\footnote{In the statement of \cite[Corollary 1]{maxwell} the mean curvature is said to be negative, however the author there is using an alternative definition of second fundamental form, resulting in a change of sign.}. The weighted H\"older spaces used in the following result are presented in the next section. 

\begin{theorem} \label{mthi}
Let $(M,g)$ be a 3-dimensional $C^{2,\beta}_{\tau}$-asymptotically hyperbolic manifold with $0<\beta<1$, $1 < \tau < 3$, scalar curvature $R_g = - 2 |\Lambda|$, and boundary of nonnegative mean curvature $h$ that satisfies \eqref{h_bound_introduction}. Assume that $k \in C^{1, \beta}_{\tau}(M)$ is a divergence and trace-free 2-tensor with $0 \leq k_{nn} \leq  h$ on $\partial M$, where $n$ denotes the unit normal pointing towards the asymptotic end. Then the boundary value problem
\begin{equation} \label{mthbvp1}
\begin{split}
\Delta_g \phi &  = \frac{|\Lambda|}{4} \phi \left( \phi^4 -1 \right) - \frac{1}{8 } |k|^2_{g} \phi^{-7} \quad \mathrm{on} \quad M,
\\ \partial_n \phi &  = - \frac{1}{4}h \phi+ \frac{1}{4 }k_{nn}\phi^{-3}  \quad \mathrm{on} \quad \partial M,
\end{split}
\end{equation}
admits a positive solution with $\phi-1 \in C^{2,\beta}_{\tau}(M)$. In particular, the metric
$\hat{g} = \phi^4 g$ and tensor $\hat{k} = \phi^{-2} k$ satisfy the vacuum constraints \eqref{hamcon}, and $(M, \hat{g}, \hat{k})$ forms an asymptotically hyperbolic maximal initial data set of class $\left(\beta, \tau \right)$ with a MOTS boundary of nonnegative mean curvature $\hat{h}=\phi^{-6}k_{nn}$.
\end{theorem}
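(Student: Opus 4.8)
The plan is to solve \eqref{mthbvp1} by the method of sub- and super-solutions on the weighted H\"older spaces, and then to extract the stated asymptotics by a separate linear weighted-space argument. Writing the interior equation as $\Delta_g\phi=F(x,\phi)$ with $F(x,\phi)=\tfrac{|\Lambda|}{4}\phi(\phi^4-1)-\tfrac18|k|_g^2\phi^{-7}$, and the boundary operator as $\mathcal B[\phi]=\partial_n\phi+\tfrac14 h\phi-\tfrac14 k_{nn}\phi^{-3}$, I call $(\phi_-,\phi_+)$ an ordered pair of sub/super-solutions if $0<\phi_-\le\phi_+$ and
\[
\Delta_g\phi_-\ge F(x,\phi_-),\quad \mathcal B[\phi_-]\le 0,\qquad \Delta_g\phi_+\le F(x,\phi_+),\quad \mathcal B[\phi_+]\ge 0.
\]
Because $F_\phi$ and the boundary nonlinearity are bounded on any interval $[\min\phi_-,\max\phi_+]$ bounded away from $0$, a large shift $\lambda$ renders the maps $\phi\mapsto\lambda\phi-F(x,\phi)$ and the corresponding boundary expression monotone there; inverting $\Delta_g-\lambda$ with the linear Robin condition on the weighted spaces then drives a monotone iteration to a solution $\phi$ with $\phi_-\le\phi\le\phi_+$. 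Everything thus reduces to producing one ordered barrier pair.

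A global super-solution is furnished by the constant $\phi_+\equiv c$ with $c>1$ chosen large. Indeed $\Delta_g c=0$, and since $|k|_g^2$ is bounded the interior inequality $0\le\tfrac{|\Lambda|}{4}c(c^4-1)-\tfrac18|k|_g^2c^{-7}$ holds once $\tfrac{|\Lambda|}{4}c^8(c^4-1)\ge\tfrac18\sup_M|k|_g^2$; the boundary inequality reduces to $hc^4\ge k_{nn}$, which follows from the hypotheses $k_{nn}\le h$ and $h\ge0$ together with $c>1$. Here the upper bound $k_{nn}\le h$ is used in an essential way.

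The crux --- and the sole source of the hypothesis \eqref{h_bound_introduction} --- is the sub-solution near the boundary. The natural choice $\phi_-\equiv1$ satisfies the interior inequality, since $0\ge-\tfrac18|k|_g^2$, but violates the boundary inequality wherever $k_{nn}<h$, because $\mathcal B[1]=\tfrac14(h-k_{nn})\ge0$. At a boundary point with $k_{nn}=0$ and $h>0$ the Robin condition forces $\partial_n\phi<0$, so any subsolution must exceed $1$ on $\partial M$ and return to $1$ in the interior, satisfying $\partial_n\phi_-\le-\tfrac14 h\phi_-$ at such points. I would build such a barrier from the boundary distance function $t(x)=\mathrm{dist}_g(x,\partial M)$, taking $\phi_-=\Psi(t)$ on the injectivity collar $M_{T_0}$ and gluing to the constant $1$ beyond $t=T_0$. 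Since $\Delta_g\phi_-=\Psi''+\Psi'\,\Delta_g t$, and $\Delta_g t$ is the mean curvature of the level set $\{t=\mathrm{const}\}$ --- equal to $h$ on $\partial M$ and controlled throughout $M_{T_0}$ by the \Hess-comparison mean curvature estimate of Section \ref{sec4} in terms of $T_0$, $\min_{M_{T_0}}\mathrm K$ and $\max_{M_{T_0}}\mathrm K^+$ --- the interior subsolution inequality becomes an explicit differential inequality for the profile $\Psi$, while the boundary inequality fixes its inward slope at $t=0$ in terms of $h$. The requirement that a profile meeting both conditions exist within the finite collar is precisely what bounds $h$ from above; carrying out this optimization (Section \ref{secex}) produces the constant $\mathcal C(T_0,\min_{M_{T_0}}\mathrm K,\max_{M_{T_0}}\mathrm K^+,\Lambda)$ of Proposition \ref{existence}, and in the model case $T_0=\infty$, $\mathrm K\le0$ the comparison bound on $\Delta_g t$ is sharpest and the threshold simplifies to $\min\{4,4|\Lambda|\}$. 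I expect this step to be the main obstacle.

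With the ordered pair $\phi_-=\Psi(t)\le\phi_+\equiv c$ in hand and $\phi_-\ge1$, the iteration yields a positive solution $\phi\in[\phi_-,c]$, hence $\phi\ge1$. To obtain the decay I would set $u=\phi-1\ge0$ and note it solves a linear equation $\Delta_g u-Vu=f$ with $V=\int_0^1 F_\phi(x,1+su)\,ds\ge|\Lambda|>0$ coercive (using $\phi\ge1$) and source $f=-\tfrac18|k|_g^2\phi^{-7}\in C^{0,\beta}_{\tau}(M)$; since $1<\tau<3$ lies strictly between the indicial roots of the asymptotic operator $\Delta_g-|\Lambda|$, the latter is an isomorphism on the corresponding weighted spaces, and weighted Schauder theory gives $u=\phi-1\in C^{2,\beta}_\tau(M)$. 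Finally the geometric conclusions follow from the conformal transformation laws: for $\hat g=\phi^4 g$ one has $R_{\hat g}=\phi^{-5}\!\left(-8\Delta_g\phi+R_g\phi\right)$, and inserting $R_g=-2|\Lambda|$ and the interior equation of \eqref{mthbvp1} gives $R_{\hat g}=-2|\Lambda|+|\hat k|_{\hat g}^2$ with $\hat k=\phi^{-2}k$, which is the Hamiltonian constraint in \eqref{hamcon} for the maximal slice $\tr_{\hat g}\hat k=0$; the momentum constraint is preserved because $\mathrm{div}_g k=0$ is conformally covariant for trace-free tensors at the weight $\phi^{-2}$. The boundary condition of \eqref{mthbvp1} transforms into $\hat h=\phi^{-6}k_{nn}$, while maximality yields $\tr_\Sigma\hat k=-\hat k_{nn}=-\phi^{-6}k_{nn}$, so the null expansion is $\theta_+=\hat h+\tr_\Sigma\hat k=0$; thus $\partial M$ is a MOTS of nonnegative mean curvature $\hat h=\phi^{-6}k_{nn}\ge0$, as claimed.
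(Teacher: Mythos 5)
Your overall framework (ordered barriers plus monotone iteration in weighted H\"older spaces, then linear weighted Schauder theory for the decay, then the standard conformal transformation identities) is the right one, and your closing computations of $R_{\hat g}$, the momentum constraint, and $\theta_+$ are fine. But the barrier construction itself rests on a sign error that inverts the whole argument. The normal $n$ in \eqref{mthbvp1} points \emph{towards the asymptotic end}, i.e.\ it is the \emph{inward} normal of the inner boundary $\partial M$ (for Schwarzschild--AdS it is $+\partial_\rho$ at $\rho=\rho_h$). Rewriting the boundary condition in terms of the outward normal $\nu=-n$ gives $\partial_\nu\phi+g(\phi)=0$ with $g(\phi)=-\tfrac14 h\phi+\tfrac14 k_{nn}\phi^{-3}$, so the correct barrier inequalities are $\mathcal B[\phi_-]\ge 0$ for a subsolution and $\mathcal B[\phi_+]\le 0$ for a supersolution --- the opposite of what you impose. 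With the correct signs, $\phi_-\equiv 1$ \emph{is} already a subsolution, precisely because $\mathcal B[1]=\tfrac14(h-k_{nn})\ge 0$; your collar profile $\Psi(t)$ is solving a non-problem. Conversely, your proposed supersolution $\phi_+\equiv c$ with $c>1$ fails: the correct requirement is $\tfrac14 hc-\tfrac14 k_{nn}c^{-3}\le 0$, i.e.\ $hc^4\le k_{nn}$, which is violated wherever $h>0$ since $k_{nn}\le h$ and $c>1$. No constant exceeding $1$ can be a supersolution on a boundary with positive mean curvature, and this is exactly where the real difficulty of the theorem sits.

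As a consequence you have also mislocated the role of the hypothesis \eqref{h_bound_introduction}. In the paper it is not used for the subsolution at all; it enters through the supersolution, which is $\phi_+=1+v_1$ with $v_\alpha$ solving the linear problem \eqref{supers}. Because $n$ is inward, the boundary operator $\partial_n+\tfrac{\alpha}{4}h$ is, with respect to the outward normal, a Robin operator with the wrong-sign coefficient $-\tfrac{\alpha}{4}h\le 0$, so $(\Delta_g-|\Lambda|,\,\partial_n+\tfrac{\alpha}{4}h)$ may have nontrivial kernel; Proposition \ref{existence} excludes this under \eqref{existence_bound} by means of a Sobolev trace inequality whose constant is controlled through the mean curvature estimate of Proposition \ref{meancurve} --- this is how Section \ref{sec4} is actually used, not to bound $\Delta_g t$ in a profile ODE. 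A continuity argument in $\alpha$ then yields $v_1>0$, and $\mathcal B[1+v_1]=-\tfrac14 k_{nn}(1+v_1)^{-3}\le 0$ by $k_{nn}\ge 0$ (which is where the lower bound on $k_{nn}$ is needed). A decreasing collar profile might conceivably be engineered into an alternative supersolution, but that is not what you wrote; as stated, your pair is not an ordered sub/supersolution pair for this boundary value problem, so the iteration does not start.
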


\begin{remark}
Let $\kappa$ be a smooth section of $T^* M$ restricted to $\partial M$. Then Theorem \ref{akfhoiuqoih} below shows how to construct a transverse-traceless (divergence and trace-free) tensor $k \in C^{1, \beta}_{\tau}(M)$ such that $k(n,\cdot)=\kappa$ on $\partial M$. Hence, for any function $\kappa(n)\in C^{\infty}(\partial M)$ satisfying $0\leq \kappa(n)\leq h$, there exists $k$ satisfying the relevant hypotheses of Theorem \ref{mthi}.
\end{remark}

This theorem is complementary to \cite[Theorem 1.2]{gicquaud}, where the resulting initial data has a MOTS of nonpositive mean curvature because of the assumption there that $k_{nn} \leq 0$ on $\partial M$. Theorem \ref{mthi} admits MOTS of positive $\hat{h}$, at the cost of placing an upper bound \eqref{h_bound_introduction} on the seed mean curvature in terms of the  geometry near the boundary. The positive sign of $\hat{h}$ is relevant from the perspective of proving the Penrose inequality,
where it is used in the perturbative setting to show that the boundary is outerminimizing; for a precise statement regarding this issue see Lemma \ref{aeoiroiqh} below. The proof of Theorem \ref{mthi} is motivated by a similar result for the asymptotically flat case treated by Maxwell \cite[Theorem 1]{maxwell}, where the hypothesis of a positive Yamabe invariant is used and may be viewed as related to the mean curvature upper bound. Indeed, when combined with a seed metric of zero scalar curvature, positivity of the Yamabe invariant \cite[page 563]{maxwell} yields an integral upper bound for the mean curvature in terms of global quantities.

The second main result is a verification of the Penrose inequality for perturbations of Schwarzschild-AdS initial data. A similar study has been carried out by the second author and Tafel in \cite{kt1, kt2} for asymptotically flat and axisymmetric initial data, where the Penrose inequality with angular momentum was confirmed up to second order of expansion with respect to a scale determined by the conformal extrinsic curvature. Moreover, the Penrose inequality for perturbations of Schwarzschild-AdS has been established in the time-symmetric case by Ambrozio \cite{ambrosio}.
In the current work, we will consider the case of non-time-symmetric perturbations, in particular those with a MOTS boundary.
We start with seed data $(g_S,k)$, where $g_S$ is the induced metric on the canonical slice of Schwarzschild-AdS and $k$ is a transverse-traceless tensor with respect to $g_S$. Recall that in Schwarzschild coordinates the seed metric of mass $m>0$ is defined on $M=[\rho_h,\infty)\times S^2$ and is given by
\begin{equation}
g_{S} =  \frac{d\rho^2}{1-\frac{2m}{\rho} + \frac{|\Lambda|}{3} \rho^2} + \rho^2 g_{S^2},
\end{equation}
where $\rho_h>0$ is the real zero of the static potential, and $g_{S^2}$ denotes the unit round metric on the 2-sphere. In order to obtain perturbations of the canonical slice, we take solutions to the constraints via the conformal method such that the conformal factor and seed extrinsic curvature satisfy
\begin{equation} \label{fact_exp145}
\phi = 1 + \sum \displaylimits_{j = 1}^{\infty}  \phi_j \epsilon^j,
\quad\quad\quad k = \epsilon k',
\end{equation}
for a perturbation parameter $\epsilon>0$, functions $\phi_{j} \in C^{2,\beta}_{\tau}(M)$, and a fixed transverse-traceless tensor $k'\in C^{1,\beta}_{\tau}(M)$. Further discussion on the construction of the perturbed data is given in Section \ref{6.3}. In the following result, $dV_S$ and $H^1$ will represent the volume form and Sobolev space of square integrable derivatives with respect to $g_S$.

\begin{theorem} \label{mthi2}
Let $(M, \hat{g}=\phi^4 g_S, \hat{k}=\phi^{-2}k)$ be a 3-dimensional, vacuum, maximal, asymptotically hyperbolic, conformally perturbed Schwarzschild-AdS initial data set as in \eqref{fact_exp145} of class $\left(\beta, \tau \right)$ with $0<\beta<1$, $\tau>3/2$, and having MOTS inner boundary $\partial M$. If there exists a constant $C$ and domain $M'\supset \partial M$ such that
\begin{equation}
\int \displaylimits_M  |k|^2_{g_S} \sqrt{1-\frac{2m}{\rho} + \frac{|\Lambda|}{3} \rho^2 }  d V_S > C  ||k_{nn}||^2_{H^1 \left( M' \right)},
\end{equation}
then the Penrose inequality
\begin{equation}
\mathfrak{m} \geq  \sqrt{\frac{A_h}{16 \pi}} \left(1+ \frac{ |\Lambda| }{12 \pi} A_h  \right)
\end{equation}
holds for all $\epsilon$ sufficiently small, where $\mathfrak{m}$ denotes total mass and $A_h$ is the area of the MOTS boundary.
\end{theorem}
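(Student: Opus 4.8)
The plan is to regard the difference of the two sides of the Penrose inequality as a function of the perturbation parameter and expand it to second order, using that Schwarzschild-AdS saturates the inequality. Set $\mathcal{P}(\epsilon) = \mathfrak{m}(\epsilon) - f(A_h(\epsilon))$ with $f(A) = \sqrt{A/16\pi}\,(1 + \frac{|\Lambda|}{12\pi}A)$, so that $\mathcal{P}(0)=0$: the unperturbed data has $\mathfrak{m}=m$, $A_h=4\pi\rho_h^2$, and the relation $2m = \rho_h + \frac{|\Lambda|}{3}\rho_h^3$ forces equality. Substituting $\phi = 1 + \epsilon\phi_1 + \epsilon^2\phi_2 + \cdots$ and $k = \epsilon k'$ into \eqref{mthbvp1} — whose seed boundary mean curvature $h = 2\sqrt{V}/\rho$ vanishes at $\rho_h$ because $V := 1 - \frac{2m}{\rho} + \frac{|\Lambda|}{3}\rho^2$ has $V(\rho_h)=0$ — and collecting powers of $\epsilon$ gives $\Delta_{g_S}\phi_1 = |\Lambda|\phi_1$ with $\partial_n\phi_1 = \frac14 k'_{nn}$ at first order, and $(\Delta_{g_S}-|\Lambda|)\phi_2 = \frac{5|\Lambda|}{2}\phi_1^2 - \frac18 |k'|^2_{g_S}$ at second order. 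Since $\Delta_{g_S}-|\Lambda|$ is coercive, the decaying solution $\phi_1 = O(\rho^{-3})$ is uniquely determined by $k'_{nn}$. Writing $\mathcal{P}(\epsilon) = \mathcal{P}_1\epsilon + \mathcal{P}_2\epsilon^2 + \cdots$, the goal is to show $\mathcal{P}_1 = 0$ and $\mathcal{P}_2 > 0$; the momentum should enter the balance only at order $\epsilon^2$ through the bulk source $|k'|^2$, which is exactly the quantity in the hypothesis.

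The engine is the static potential $N = \sqrt{V}$, which satisfies $\Delta_{g_S}N = |\Lambda| N$ — the same operator annihilating $\phi_1$ — vanishes on $\partial M$, and grows linearly at infinity. The asymptotically hyperbolic mass (Section \ref{prel}) of a conformal metric $\phi^4 g_S$ depends only on the $\rho^{-3}$-coefficient of $\phi$ and equals a fixed multiple of $\lim_{r\to\infty}\int_{S_r}(N\partial_n\phi - \phi\,\partial_n N)\,dA$, where $S_r=\{\rho=r\}$. Applying the divergence theorem to $N\nabla\phi_j - \phi_j\nabla N$ on $M_r=\{\rho\le r\}$ converts this limit into a boundary integral over $\partial M$ plus the bulk integral of $N$ against the source of $\phi_j$; since $N|_{\partial M}=0$, only $-\phi_j\,\partial_n N$ survives at $\partial M$, where $\partial_n N|_{\rho_h} = \frac12(\rho_h^{-1} + |\Lambda|\rho_h)$. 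At first order the source vanishes, so $\mathfrak{m}_1$ is a multiple of $\int_{\partial M}\phi_1\,dA$; the area variation $A_1 = 4\int_{\partial M}\phi_1\,dA$ produces the same multiple after inserting $f'(4\pi\rho_h^2)$, so with the standard mass normalization $\mathcal{P}_1 = \mathfrak{m}_1 - f'(4\pi\rho_h^2)A_1 = 0$. This is the statement that Schwarzschild-AdS is a critical point of the Penrose functional.

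At second order the same identity applied to $\phi_2$ now produces a genuine bulk term $\int_M N\,(\frac{5|\Lambda|}{2}\phi_1^2 - \frac18|k'|^2_{g_S})\,dV_S$, while the quadratic contribution $\phi_1^2 = O(\rho^{-6})$ decays too fast to affect the mass. Crucially, the contributions of $\phi_2$ itself to $\mathfrak{m}_2$ and to $f'A_2$ cancel, so $\phi_2$ drops out of $\mathcal{P}_2$ entirely. Collecting the remaining second-order area terms leaves
\begin{equation*}
\mathcal{P}_2 = \frac{1}{16\pi}\int_M \sqrt{V}\,|k'|^2_{g_S}\,dV_S - \frac{5|\Lambda|}{4\pi}\int_M \sqrt{V}\,\phi_1^2\,dV_S - \frac{3(3m-\rho_h)}{4\pi\rho_h^2}\int_{\partial M}\phi_1^2\,dA - \frac{|\Lambda|\rho_h^2-1}{16\pi^2\rho_h^3}\left(\int_{\partial M}\phi_1\,dA\right)^2 .
\end{equation*}
The first term equals $\frac{1}{16\pi}$ times the left-hand side of the hypothesis after dividing out $\epsilon^2$ (recall $N=\sqrt{V}$ and $|k|^2 = \epsilon^2|k'|^2$); every remaining term is quadratic in $\phi_1$.

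To close the argument I would bound the $\phi_1$-terms by $k'_{nn}$. Testing $\Delta_{g_S}\phi_1 = |\Lambda|\phi_1$ against $\phi_1$, and then against $N\phi_1$ while using $N|_{\partial M}=0$ and $\partial_n N|_{\rho_h}>0$, yields the energy identities $\int_M(|\nabla\phi_1|^2 + |\Lambda|\phi_1^2)\,dV_S = -\frac14\int_{\partial M}\phi_1 k'_{nn}\,dA$ and $\int_M \sqrt{V}\,(|\nabla\phi_1|^2 + \frac{|\Lambda|}{2}\phi_1^2)\,dV_S = \frac12\,\partial_n N|_{\rho_h}\int_{\partial M}\phi_1^2\,dA$; combined with the trace bound $\|k'_{nn}\|_{L^2(\partial M)} \le C\|k'_{nn}\|_{H^1(M')}$ these give $\int_M\sqrt{V}\,\phi_1^2\,dV_S + \int_{\partial M}\phi_1^2\,dA + (\int_{\partial M}\phi_1\,dA)^2 \le C''\|k'_{nn}\|^2_{H^1(M')}$. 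Hence each $\phi_1$-quadratic term in $\mathcal{P}_2$ is bounded in absolute value by $C''\|k'_{nn}\|^2_{H^1(M')}$, and taking the constant $C$ of the hypothesis larger than $16\pi C''$ forces $\mathcal{P}_2 > 0$. Since $\mathcal{P}(0) = \mathcal{P}_1 = 0$ and the expansion \eqref{fact_exp145} (justified in Section \ref{6.3}) makes the remainder $o(\epsilon^2)$ uniformly, I conclude $\mathcal{P}(\epsilon) > 0$ for all small $\epsilon$. The main obstacle is the second-order step: establishing the mass expansion through the static-potential Green's identity with the correct normalization, so that the first variation vanishes and the $\phi_2$-dependence cancels, leaving a manifestly positive bulk term to compete against the $\phi_1$-quadratics. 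The elliptic estimates and the final domination are then comparatively routine.
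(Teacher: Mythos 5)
Your overall strategy coincides with the paper's: expand in $\epsilon$, observe first-order saturation, note the cancellation of the $\phi_2$ boundary terms between $p_0^{(2)}$ and $\mathcal{A}^{(2)}$, use the static-potential-weighted energy identity to control $\int_M\sqrt{V}\,\phi_1^2\,dV_S$ by the boundary integral of $\phi_1^2$ (this is precisely \eqref{phi1volest}), and bound the $\phi_1$-quadratic terms by $\|k'_{nn}\|^2_{H^1(M')}$ via the identity $\int_M(|\nabla\phi_1|^2+|\Lambda|\phi_1^2)\,dV_S=-\tfrac14\int_{\partial M}\phi_1 k'_{nn}\,d\sigma$ (Lemma \ref{lemmap1knn}). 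Those parts are correct and essentially identical to the paper's argument, modulo your cleaner packaging through Green's identities with $N=\sqrt{V}$ in place of the paper's direct radial integration in \eqref{enecomp}.

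The genuine gap is the linear momentum. The quantity to be bounded below is $\mathfrak{m}=\sqrt{p_0^2-\sum_i p_i^2}$, not the energy $p_0$, and your Green's-identity representation $\lim_{r\to\infty}\int_{S_r}(N\partial_n\phi-\phi\,\partial_n N)\,dA$ computes only $p_0$. The first-order momentum $p_i^{(1)}$ is generically nonzero: by \eqref{pi1} it equals $-\tfrac{|\Lambda|}{6\pi}\lim_{\rho\to\infty}\rho^3\langle x^i\rho\partial_\rho\phi_1-x^i\phi_1\rangle$, i.e.\ the $\ell=1$ part of the leading asymptotics of $\phi_1$, and it does not enter ``through the bulk source $|k'|^2$'' as you suggest. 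Consequently $\mathfrak{m}^{(2)}=p_0^{(2)}-\tfrac{1}{2m}\sum_i\bigl(p_i^{(1)}\bigr)^2$, and this negative term is absent from your formula for $\mathcal{P}_2$; it works against the inequality and must also be dominated by $\|k'_{nn}\|^2_{H^1(M')}$ (it appears explicitly in the paper's target inequality \eqref{PIe2}). Your two energy identities do not suffice here: they control $\|\phi_1\|_{H^1(M)}$ and boundary traces, but not the coefficient of $\rho^{-3}$ in the expansion of $\phi_1$ at infinity. The paper closes this in Lemma \ref{lemap1approx} by expanding $\phi_1$ in spherical harmonics outside a large coordinate sphere, identifying $p_i^{(1)}$ with the $\ell=1$ coefficients, and estimating those via the trace theorem by $\|\phi_1\|_{H^1(M)}$ and hence, through Lemma \ref{lemmap1knn}, by $\|k'_{nn}\|_{H^1(M')}$. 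You would need to add this step (or an equivalent representation/decay argument for the asymptotic coefficient) before concluding $\mathcal{P}_2>0$.
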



\section{Preliminaries} \label{prel}

\subsection{Asymptotically hyperbolic initial data}
Let $\Lambda<0$ and consider the reference hyperbolic space $(\mathbb{H}^3_{\Lambda}, b)$ of curvature $\Lambda/3$, where the metric is given in scaled geodesic polar coordinates by
\begin{equation} \label{refmet}
b=  \frac{3}{|\Lambda|}\left( d r^2 + \sinh^2 r g_{S^2} \right),
\end{equation}
and $g_{S^2}$ is the unit round metric on the 2-sphere. We will work with definitions of weighted H\"older spaces as presented in \cite{lan jang}; see \cite{maxwell_spaces} for recent developments on the use of weighted spaces in the asymptotically hyperbolic context. Let $B\subset\mathbb{H}^3_{\Lambda}$ be a ball, $l\in\mathbb{N}$, $\beta\in[0,1]$, $\tau\in\mathbb{R}$ and define $C^{l,\beta}_{\tau} \left( \mathbb{H}^3_{\Lambda} \setminus B \right)$ to be the collection of functions that are locally H\"older smooth of order $(l,\beta)$ on the complement domain and satisfy
\begin{equation}
	| f |_{C^{l,\beta}_{\tau}  \left( \mathbb{H}^3_{\Lambda} \setminus B \right) } := \sum_{i =0}^l \sup_{x\in \mathbb{H}^3_{\Lambda} \setminus B} e^{\tau r } |\mathring{\nabla}^i f(x)|_b +\sup_{x\in \mathbb{H}^3_{\Lambda} \setminus B} e^{\tau r } [\mathring{\nabla}^l f]_{\beta; B_1(x)} <\infty,
\end{equation}
where $\mathring{\nabla}$ denotes covariant differentiation with respect to $b$ and
\begin{equation}
	[\mathring{\nabla}^l f]_{\beta; B_1(x)} =\sup_{1\le i_1, \dots, i_l \le 3} \left( \sup_{y\neq z\in B_1(x) } \frac{|e_{i_1} \cdots e_{i_l} (f)(y) - e_{i_1} \cdots e_{i_l} (f) (z) |}{d_b(y,z)^\beta} \right),
\end{equation}
where $B_1(x)$ is the unit geodesic ball centered at $x$ intersected with $\mathbb{H}^3_{\Lambda} \setminus B$, the $b$-distance is labelled $d_b$, and $\{e_i \}$ is the orthonormal frame for $b$ adapted to the geodesic polar coordinates used in \eqref{refmet}. This definition may be extended to tensors of arbitrary type by requiring that their components (in the orthonormal frame) lie inside $C^{l,\beta}_{\tau} \left( \mathbb{H}^3_{\Lambda} \setminus B \right)$. Furthermore, if $\mathcal{K}\subset M$ is a compact set and $\Psi: M \setminus \mathcal{K} \to \mathbb{H}^3_{\Lambda} \setminus B$ is a diffeomorphism, then
the weighted H\"older norms $|f|_{C^{l,\beta}_{\tau} \left( M \right)}$ may be defined as the sum of the weighted norm $|\Psi_* f|_{C^{l,\beta}_{\tau} \left( \mathbb{H}^3_{\Lambda} \setminus B \right)}$ and the typical $C^{l, \beta}$ norm over $\mathcal{K}$. The completion of the spaces $C^{l,\beta}_{c} \left( M \right)$ of compact support, with respect to the weighted H\"older norm, is then denoted $C^{l,\beta}_{\tau} \left( M \right)$.

\begin{definition} \label{def1}
We say that a 3-dimensional Riemannian manifold $(M,g)$ is $C^{l,\beta}_{\tau}$-asymptotically hyperbolic for $l \in \mathbb{N}$, $\beta\in[0,1]$, and $\tau >0$, if there exists a compact set $\mathcal{K} \subset M $ and a diffeomorphism
$\Psi: M \setminus \mathcal{K} \to \mathbb{H}^3_{\Lambda} \setminus B$ such that
\begin{equation}
\Psi_*  g - b \in C^{l,\beta}_{\tau} \left(  \mathbb{H}^3_{\Lambda} \setminus B \right).
\end{equation}
\end{definition}


\begin{definition}
A triple $(M, g, k)$ is an asymptotically hyperbolic initial data set of class $(\beta, \tau)$ with  $\beta\in[0,1]$ and $\tau >0$ if
\begin{itemize}
\item $(M, g)$  is a $C^{2, \beta}_{\tau}$-asymptotically hyperbolic manifold,
\item $k$ is a symmetric 2-tensor and $k \in C^{1, \beta}_{\tau} (M)$.
\end{itemize}
\end{definition}

\subsection{Conformal method for the Einstein constraint equations}
In this well-studied approach to the vacuum constraints, a triple $(M,g, \varsigma)$ consisting of Riemannian manifold $(M,g)$ and a scalar function (prescribed mean curvature) $\varsigma$ is given. The constraints are then solved by searching for a scalar function $\phi>0$ (the conformal factor) and a one-form $X$ that satisfy
\begin{align}
\Delta_{g} \phi  - \frac{1}{8} R_g \phi =& \frac{1}{4}\left(\frac{1}{3} \varsigma^2 + |\Lambda| \right) \phi^5  - \frac{1}{8} |k|^2_{g} \phi^{-7}, \label{hamcon1}\\
 \mathrm{div}_g \left( \mathcal{D}_g X \right) = & \frac{2}{3} \phi^6 d \varsigma, \label{momcon1}
\end{align}
where $R_g$ is scalar curvature, $\Delta_g$ is the Laplace operator of $g$, the conformal extrinsic curvature is given by $k = \mathcal{D}_g X$, 
and $\mathcal{D}_g$ represents the trace-free Lie derivative
\begin{equation}
\mathcal{D}_g X  := \mathcal{L}_{X} g - \frac{2}{3} \left( \mathrm{div}_g X \right)g.
\end{equation}
The semilinear elliptic equation (\ref{hamcon1}) is called the Lichnerowicz equation \cite{lichnerowicz}. Furthermore,
the desired solution initial data set $(M, \hat{g}, \hat{k})$ for the vacuum constraint equations may then be obtained from these quantities as follows
\begin{equation}
\hat{g} = \phi^4 g, \quad\quad\quad
\hat{k} = \phi^{-2} k+ \frac{1}{3} \varsigma \phi^4 g.
\end{equation}
If $\varsigma = \mathrm{const}$ (CMC data) then the conformally formulated constraints decouple, and the solution can be obtained by firstly solving (\ref{momcon1}) for $X$ and then solving (\ref{hamcon1}) for $\phi$. Alternatively, if one is given a transverse-traceless tensor $k$ on $(M,g)$ to begin with, then it is only necessary to solve the Lichnerowicz equation, and we refer to $(M,g,k)$ as seed data.

\subsection{Marginally outer trapped surface boundary condition}
Suppose that a solution to the constraints $(M,\hat{g},\hat{k})$ is given with an asymptotic end, as described above.
Then the future null expansion of an inner boundary $\partial M$ takes the form
\begin{equation} \label{thetaplus}
\theta_+ (\partial M) = \varsigma - \hat{k}_{\hat{n} \hat{n}} + \hat{h},
\end{equation}
where $\hat{h}= \hat{\mathrm{div}}_{\partial M} \hat{n}$ is mean curvature and $\hat{n}$ is the boundary unit normal with respect to $\hat{g}$ that is pointing towards the asymptotic end. In terms of conformal quantities we have
\begin{equation}
\theta_+ (\partial M) = \frac{2}{3}  \varsigma  -  k_{nn}\phi^{-6} +  \left(4 \partial_n \phi + h \phi \right)\phi^{-3},
\end{equation}
where $h$ and $n$ are the boundary mean curvature and unit normal with respect to $g$. The boundary is a MOTS if
$\theta_+ (\partial M)=0$, which reduces to a nonlinear Robin-type boundary condition for $\phi$ that may be expressed as
\begin{equation}
\partial_n \phi + \frac{h}{4} \phi = \frac{1}{4} k_{nn}  \phi^{-3}-\frac{\varsigma}{6} \phi^3 \quad \mathrm{on} \quad \partial M.
\end{equation}
The physical interpretation of the MOTS condition is that light rays emanating from such a surface to the future are not diverging. Under favorable conditions, this indicates the existence of a black hole region within the spacetime evolved from the initial data.

\subsection{Maximal initial data with a MOTS}

In summary, if the vacuum initial data set $(M, \hat{g}, \hat{k})$ has vanishing mean curvature ($\varsigma = 0$) and an inner boundary $\partial M$ that is a MOTS, then the conformally formulated constraint equations reduce to the boundary value problem
\begin{equation} \label{constraintbvp}
\begin{split}
\Delta_g \phi   - \frac{1}{8} R_g \phi & = \frac{1}{4} |\Lambda|  \phi^5- \frac{1}{8} |k|^2_{g}\phi^{-7} \quad \mathrm{on} \quad M,
\\ \partial_n \phi + \frac{1}{4}h \phi &  =  \frac{1}{4}k_{nn}\phi^{-3}   \quad \mathrm{on} \quad \partial M,
\end{split}
\end{equation}
in addition to the vector equation for the extrinsic curvature
\begin{equation} \label{vecconcmc}
 \mathrm{div}_g \left( \mathcal{D}_g X \right) =0.
\end{equation}
In what follows we will study the existence of solutions to (\ref{constraintbvp}) and (\ref{vecconcmc}) in the asymptotically hyperbolic setting, under the CMC assumption. Note that the asymptotically hyperbolic condition implies that the only constant mean curvature possible is $\varsigma=0$. 

\section{Mean Curvature Estimate}
\label{sec4}

In this section we will prove a lower bound on the mean curvature of constant distance surfaces to the boundary of a Riemannian manifold, assuming that the boundary is of nonnegative mean curvature. This estimate will play an important role in the construction of barriers for the Lichnerowicz equation.

\begin{proposition} \label{meancurve}
Let $(M,g)$ be a 3-dimensional complete Riemannian manifold with boundary $\partial M$, and set $M_T = \{x \in M\mid t(x) \leq T \}$ where $t(x)= \mathrm{dist} \left(x, \partial M \right)$. Assume that $T$ is smaller than the normal injectivity radius of $\partial M$, and that its mean curvature is non-negative. Then
\begin{equation} \label{hminineq}
\max \displaylimits_{M_T} |h^-_t| \leq  T \cosh^2 ( \sqrt{|\mathrm{K}_0| } T )   \sum \displaylimits_{i=1,2}   \max \displaylimits_{M_T} \mathrm{K}^+ (\partial_t, e_i  ) ,
\end{equation}
where $h_t^-$ is the negative part of the mean curvature for the $t$-level sets, $\mathrm{K}^+$ is the positive part of the sectional curvature, $\mathrm{K}_0$ is a negative lower bound for sectional curvature in $M_T$, and $\{ \partial_t, e_1, e_2 \}$ is an orthonormal frame.
\end{proposition}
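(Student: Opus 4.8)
The plan is to reduce \eqref{hminineq} to a one–dimensional Riccati comparison along the geodesics issuing normally from $\partial M$. Fix $p\in M_T$ and let $\gamma\colon[0,t]\to M$ be the unit–speed minimizing geodesic from $\partial M$ to $p$, where $t=t(p)\le T$. Since $T$ lies below the normal injectivity radius, the distance function is smooth near $\gamma$, the field $\partial_t=\nabla t$ is geodesic, and the level sets $\Sigma_s=\{t=s\}$ foliate a neighborhood of $\gamma$ without focal points for $s\in[0,T]$. Letting $S_s$ denote the shape operator of $\Sigma_s$ with respect to $\partial_t$, the standard evolution of equidistant hypersurfaces gives the matrix Riccati equation $S_s'+S_s^2+\mathcal{R}_s=0$ along $\gamma$, where $\mathcal{R}_s(X)=R(X,\partial_t)\partial_t$ has eigenvalues $\mathrm{K}(\partial_t,e_i)$. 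Taking the trace produces the scalar equation
\begin{equation}
h_s'=-|S_s|^2-\operatorname{Ric}(\partial_t,\partial_t)=-|S_s|^2-\sum_{i=1,2}\mathrm{K}(\partial_t,e_i),
\end{equation}
which I would integrate together with the initial value $h_0=\operatorname{tr}S_0=h\ge 0$ supplied by the boundary mean–curvature hypothesis.

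To retain control of the quadratic focusing term I would pass to the Jacobi formulation: let $A(s)$ solve $A''+\mathcal{R}_s A=0$ with $A(0)=\mathrm{Id}$ and $A'(0)=S_0$, so that $S_s=A'(s)A(s)^{-1}$ and $h_s=\bigl(\log\det A(s)\bigr)'$. The hypothesis that $T$ is below the normal injectivity radius is precisely the statement that $A(s)$ stays invertible, i.e.\ $\det A>0$ on $[0,T]$, and this is exactly what prevents the Riccati solution from blowing up to $-\infty$. For the quantitative bound I would compare with the model of constant curvature $\mathrm{K}_0$: using $\mathrm{K}\ge\mathrm{K}_0$ and the Rauch comparison theorem, $A(s)$ and $A(s)^{-1}$ are dominated by the constant–curvature solutions $\cosh(\sqrt{|\mathrm{K}_0|}\,s)$ and $\sinh(\sqrt{|\mathrm{K}_0|}\,s)/\sqrt{|\mathrm{K}_0|}$, which is the source of the amplification factor $\cosh^2(\sqrt{|\mathrm{K}_0|}\,s)\le\cosh^2(\sqrt{|\mathrm{K}_0|}\,T)$.

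With this in hand the assembly is as follows. Once the focusing term is controlled as above, the remaining source able to drive $h_s$ negative is the positive part of the sectional curvature, so I would estimate the accumulated negative contribution to $h_s$ along $\gamma$ by the amplified quantity $\cosh^2(\sqrt{|\mathrm{K}_0|}\,s)\sum_i\mathrm{K}^+(\partial_t,e_i)$, integrate over $s\in[0,t]$ (the length $t\le T$ providing the linear factor $T$), and invoke $h_0\ge 0$. This yields the pointwise bound $h_t\ge -T\cosh^2(\sqrt{|\mathrm{K}_0|}\,T)\sum_i\max_{M_T}\mathrm{K}^+(\partial_t,e_i)$, and taking the supremum over $p\in M_T$ gives \eqref{hminineq}.

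The main obstacle is the nonlinear focusing term $-|S_s|^2$: on its own it allows $h_s$ to decrease without bound, and there is no a priori pointwise estimate for $S_s$ in terms of the stated data alone. The crux is therefore to convert the no–focal–point hypothesis into a quantitative, boundary–data–independent control of this term through the invertibility of $A(s)$ together with the constant–curvature comparison, so that the residual negativity of $h_t$ is governed only by $\mathrm{K}^+$ and the injectivity–radius and curvature–lower–bound data. One must also keep careful track of the sign convention for the shape operator, so that $\partial_t$ points toward the asymptotic end and the boundary value $h\ge 0$ serves as the correct initial condition.
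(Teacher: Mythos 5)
Your reduction to the traced Riccati equation $h_s'=-|S_s|^2-\sum_i\mathrm{K}(\partial_t,e_i)$ identifies the right difficulty but does not resolve it, and this is a genuine gap. The no-focal-point hypothesis (invertibility of $A(s)$) together with Rauch does \emph{not} yield a bound on $|S_s|^2$, pointwise or integrated, in terms of the data in the statement: the proposition controls only the \emph{sum} of the boundary principal curvatures, so the individual eigenvalues of $S_0$ --- and hence $|S_s|^2$ for small $s$ --- can be arbitrarily large while $h_0=0$. Concretely, for a saddle-type boundary in flat space with principal curvatures $\pm\lambda$ one has $h_0=0$ but $h_t=-2\lambda^2 t/(1-\lambda^2t^2)<0$ for $0<t<1/\lambda$, i.e.\ within the focal radius; the focusing term alone drives $h_t$ negative, so no curvature comparison can absorb it from the scalar trace equation. (Note also that Cauchy--Schwarz gives $|S_s|^2\ge h_s^2/2$, which points the wrong way for a lower bound on $h_s$.) Your final ``assembly'' paragraph therefore asserts the conclusion rather than deriving it; this example is worth testing against any argument you write down.

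The paper's proof is arranged precisely so that the focusing term never appears. Rather than tracing the shape-operator equation, it differentiates $A_{JJ}=A(J,J)=\langle\nabla_{\partial_t}J,J\rangle$ along Jacobi fields, where
\begin{equation*}
\partial_t A_{JJ}=-\langle R(J,\partial_t)\partial_t,J\rangle+|\nabla_{\partial_t}J|^2\ \ge\ -\max_{M_T}\mathrm{K}^+(\partial_t,J)\,\max_{M_T}|J|^2_g,
\end{equation*}
so the quadratic term enters with a \emph{favorable} sign and is simply discarded. Integrating gives $A_{J_iJ_i}(t)\ge A_{J_iJ_i}(0)-t\max\mathrm{K}^+\max|J_i|^2$, the fields $J_i$ are chosen via a Dirichlet problem to be orthonormal at both endpoints so that summing over $i$ produces $h_t$ on the left and $h_0\ge0$ on the right, and the only role of the Rauch/Warner comparison is the bound $\max|J_i|^2\le\cosh^2(\sqrt{|\mathrm{K}_0|}\,t)$, which is where the amplification factor originates. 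To salvage your route you would need either to import this rearrangement or to run an eigenvalue-wise matrix Riccati comparison, which requires pointwise lower bounds on the principal curvatures of $\partial M$ that the hypotheses do not supply.
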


\begin{proof}
Choose $T<T_0$, where $T_0$ is the normal injectivity radius of $\partial M$. The mean curvature $h_t$ of the $t$-level sets can be estimated from below in the following way. Consider a Jacobi field $J$ orthogonal to $\partial_t$ along a radial geodesic emanating from the boundary. Then from the Riccati equation we have
\begin{equation}
\partial_t \left(A_{JJ} \right)  = - \mathrm{K} \left(\partial_t, J\right) |J|^2_g + A^2_{JJ} 
\geq -  \max \displaylimits_{M_T} \mathrm{K}^+ \left(\partial_t, J \right)   \max \displaylimits_{M_T} |J|^2_g, 
\end{equation}
where $A$ is the second fundamental form of constant $t$-surfaces, and $\mathrm{K}(\partial_t,J)$ denotes sectional curvature. Hence
\begin{equation} \label{ajj}
A_{JJ}(t)   \geq -  t \left( \max \displaylimits_{M_T}   \mathrm{K}^+ \left(\partial_t, J \right)  \max \displaylimits_{M_T} |J|^2_g  \right)  + A_{JJ}(0),
\end{equation}
for $t\leq T$. Next, consider the Jacobi field Dirichlet problems
\begin{equation} \label{jaceq}
\nabla_t \nabla_t J_i + R(J_i, \partial_t) \partial_t =0,\quad\quad
J_i(0)=v_i,\quad\quad J_i(T)=w_i,\quad\quad i=1,2,
\end{equation}
where $\{v_1,v_2\}$ and $\{w_1,w_2\}$ are orthonormal eigenvectors of the second fundamental form $A$ at $t=0$ and $t=T$, respectively. Note that since there are no conjugate points for $t \leq T < T_0$, a unique solution exists to these boundary value problems. Using $J_i$, $i=1,2$ from (\ref{jaceq}) and taking a trace of (\ref{ajj}) yields
\begin{equation} \label{hest}
\max \displaylimits_{M_T} |h^-_t| \leq T  \sum \displaylimits_{i=1,2} \left(  \max \displaylimits_{M_T}   \mathrm{K}^+ \left(\partial_t, J_i  \right)     \max \displaylimits_{M_T}|J_i|^2_g \right),
\end{equation}
where the non-negativity of $h_0$ has been used. Observe that if $\mathrm{K}^+=0$, then the constant distance surfaces have nonnegative mean curvature ($h_t^-=0$). 

To estimate the size of the Jacobi field we will use the Rauch comparison theorem for hypersurfaces  (see \cite[Theorem 4.3]{warner}). Let $\mathrm{K}_0<0$ be a lower bound for the sectional curvature of $M_T$, and consider the hyperbolic space $(\mathbb{H}^3,g^{\mathrm{K}_0})$ of curvature $\mathrm{K}_0$ with metric
\begin{equation}
g^{\mathrm{K}_0} = d s^2 + \frac{1}{|\mathrm{K}_0|}  \sinh^2 \left( \sqrt{|\mathrm{K}_0| } s \right) g_{S^2}.
\end{equation}
The two eigenvalues of the second fundamental form for the $s= s_0$ coordinate sphere are given by
\begin{equation}
\lambda^{\mathrm{K}_0}_i ( s_0 )  = \sqrt{|\mathrm{K}_0|} \coth \left( \sqrt{|\mathrm{K}_0| }  s_0\right),\quad\quad i=1,2.
\end{equation}
Moreover, if $z(s)$ is a parallel transported unit vector field along and orthogonal to a radial geodesic in hyperbolic space, then
\begin{equation}
J^{\mathrm{K}_0} (s)  =  \frac{C}{\sqrt{|\mathrm{K}_0|}} \cosh \left( \sqrt{|\mathrm{K}_0| }  s\right) z(s) 
\end{equation}
is a Jacobi field for any constant $C$. We may choose $s_0 >0$ sufficiently small so that
\begin{equation}
\lambda_i (0) \leq \lambda^{\mathrm{K}_0}_i ( s_0 ) = \frac{1}{s_0} + O (1),
\end{equation}
where $\lambda_i (0)$ are the eigenvalues of the extrinsic curvature of $\partial M$, and we can arrange for $J^{\mathrm{K}_0}$ to have unit norm at $s=s_0$ by setting 
\begin{equation}
C = \frac{\sqrt{|\mathrm{K}_0|}}{  \cosh \left( \sqrt{|\mathrm{K}_0| }  s_0 \right)}
\quad\quad \Rightarrow \quad\quad
|J^{\mathrm{K}_0} \left( s_0 \right)|_{g^{\mathrm{K}_0}} = |J_i \left( 0\right)|_g =1.
\end{equation}
Therefore, by applying \cite[Theorem 4.3]{warner} it follows that
\begin{equation} \label{jacestim}
\begin{split}
& |J_i (t) |_g \leq  |J^{\mathrm{K}_0} (s_0 + t)|_{g^{\mathrm{K}_0}} = \frac{\cosh\left( \sqrt{|\mathrm{K}_0| } \left(  s_0 +t \right) \right)}{\cosh \left( \sqrt{|\mathrm{K}_0| } s_0 \right)} \quad \mathrm{for} \quad t \in [0,T].
\end{split}
\end{equation}
Since (\ref{jacestim}) is satisfied for all appropriately small $s_0$, we can take the limit $s_0 \to 0$ to obtain
\begin{equation}\label{aohoiyhgq}
|J_i (t) |_g \leq \cosh\left( \sqrt{|\mathrm{K}_0| } t \right) \quad \mathrm{for} \quad t \in [0,T], \quad i=1,2.
\end{equation}
Hence, the desired inequality (\ref{hminineq}) now follows from (\ref{hest}) and \eqref{aohoiyhgq}.
\end{proof}

\section{Constraint Equations With a MOTS Boundary Condition} \label{secex}

Let $(M,g)$ be a $C^{2,\beta}_{\tau}$-asymptotically hyperbolic manifold with a boundary $\partial M$, such that $0<\beta<1$ and $1<\tau<3$. We will use a barrier method following Gicquaud \cite[Proposition 2.1]{gicquaud} to solve the Lichnerowicz equation with MOTS boundary condition (\ref{constraintbvp}). The result of \cite{gicquaud} only gives rise to initial data with a MOTS boundary that has nonpositive mean curvature, while here we obtain initial data with a MOTS boundary of positive mean curvature. The construction of barriers in this latter case requires additional control of the geometry of the seed metric near the boundary, which is recorded below in Proposition \ref{existence}. It should be noted that a similar formulation of the barrier method was utilized by Maxwell \cite{maxwell}, but with the local control on the near-boundary geometry replaced with a restriction on the Yamabe invariant, in the context of asymptotically flat initial data with MOTS boundary condition. In addition, Sakovich \cite{sakovich} studied the Lichnerowicz equation with matter on asymptotically hyperbolic manifolds without inner boundary.

The proof of the main theorem will rely on the existence of a solution to a linear Robin boundary value problem on the asymptotically hyperbolic manifold. It may be interpreted as a model equation for the nonlinear Lichnerowicz equation with MOTS boundary condition, and will be used in the construction of the global supersolution for the latter. In order to solve this model equation we require a bound on the mean curvature of the inner boundary of the form
\begin{equation}
\max \displaylimits_{\partial M} h < \mathcal{C} \left( T_0, \mathrm{K}_0, \mathrm{K}_1,\Lambda\right),
\end{equation}
where $T_0$ is the injectivity radius of the boundary and $\mathrm{K}_0$ and $\mathrm{K}_1$ are sectional curvature lower and upper bounds in a neighborhood of $\partial M$. In the limit as  $|\Lambda| \to 0$ the function $\mathcal{C}$ approaches zero, whereas for boundaries with infinite injectivity radius in manifolds of nonpositive curvature it is equal to $\min\{4, 4 |\Lambda|\}$. Moreover, if there exists a foliation of positive mean curvature surfaces emanating from $\partial M$, the form of $\mathcal{C}$ simplifies in a significant way and no longer depends on $\mathrm{K}_0$ and $\mathrm{K}_1$.

\begin{proposition} \label{existence}
Let $(M,g)$ be a $C^{2,\beta}_{\tau}$-asymptotically hyperbolic manifold with $0<\beta<1$, and $\tau >0$. Assume that the mean curvature $h$ of the boundary $\partial M$ is nonnegative and bounded from above in the following way
\begin{equation} \label{existence_bound}
\max \displaylimits_{\partial M} h  < \mathcal{C} \left( T_0, \mathrm{K}_0, \mathrm{K}_1, \Lambda \right):=  \sup \displaylimits_{T \in \left[0,T_0 \right)} \left(  \frac{4 T\min\{1,|\Lambda|\}}{ 8 + T +2 T^2 \mathrm{K}_1  \cosh^{2} ( \sqrt{|\mathrm{K}_0| } T )  } \right) ,
\end{equation}
where $T_0$ is the boundary injectivity radius, $\mathrm{K}_0 <0 $ is a lower bound for sectional curvature in $M_T$, and $\mathrm{K}_1 : =  \max \displaylimits_{M_T} \mathrm{K}^+$ is the maximum of the positive part of the sectional curvature $\mathrm{K}$. Then the boundary value problem
\begin{equation}
\begin{split}
&\left( \Delta_g  - |\Lambda| \right)u = F \quad \mathrm{on} \quad  M, \\
& \left( \partial_n +  \frac{\alpha}{4}h \right)u = f \quad \mathrm{on} \quad \partial M,
\end{split}
\end{equation}
with $F \in C^{0, \beta}_{\delta} (M) $, $f \in C^{1, \beta} ( \partial M) $  and $\alpha \in [0,1]$ admits a solution $u$ in $C^{2,\beta}_{\delta} (M)$ for $1<\delta <3 $.
\end{proposition}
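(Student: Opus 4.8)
The plan is to treat this as a linear Fredholm problem and to reduce existence to a uniqueness statement, which is precisely where the mean curvature bound \eqref{existence_bound} is consumed. Writing $\mathcal{L}u = \left((\Delta_g - |\Lambda|)u,\ (\partial_n + \tfrac{\alpha}{4}h)u\right)$, I would first argue that
\[
\mathcal{L}: C^{2,\beta}_{\delta}(M) \longrightarrow C^{0,\beta}_{\delta}(M)\times C^{1,\beta}(\partial M)
\]
is a bounded linear map which is Fredholm of index zero for every $\delta$ lying strictly between the indicial roots of $\Delta_g - |\Lambda|$ at the asymptotically hyperbolic end. A radial computation in the model metric \eqref{refmet} gives $(\Delta_b - |\Lambda|)e^{-\delta r} \sim \tfrac{|\Lambda|}{3}(\delta-3)(\delta+1)e^{-\delta r}$ as $r\to\infty$, so the indicial roots are $-1$ and $3$, the admissible range is $-1<\delta<3$, and the hypothesis $1<\delta<3$ sits well inside it.

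To justify the Fredholm property I would patch two standard ingredients. Near infinity, the weighted Schauder theory for geometric elliptic operators on asymptotically hyperbolic manifolds \cite{andersson_chrusciel, gicquaud} yields a semi-Fredholm estimate once $\delta$ avoids the indicial roots. Over the compact collar and interior, the Robin condition $\partial_n + \tfrac{\alpha}{4}h$ is a regular (complementing) elliptic boundary condition, so Schauder estimates up to $\partial M$ hold; the boundary being compact contributes no index. Combining these via a partition of unity produces a parametrix and the index-zero Fredholm property for $\mathcal{L}$. By the Fredholm alternative, $\mathcal{L}$ is surjective as soon as its kernel is trivial, so the whole proposition reduces to showing that the homogeneous problem admits only $u\equiv 0$.

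For $u$ in the kernel, the decay rate $\delta>1$ ensures that $u$ and $\nabla u$ are square integrable on $(M,dV_g)$ and that the boundary term at infinity vanishes, so integration by parts is legitimate. Since the outward unit normal along the inner boundary is $-n$, pairing the equation with $u$ and inserting the boundary condition yields
\[
\int_M |\nabla u|^2\,dV_g + |\Lambda|\int_M u^2\,dV_g = \frac{\alpha}{4}\int_{\partial M} h\,u^2\,dA_g.
\]
The right-hand side is nonnegative, using here that $h\geq 0$ and $\alpha\geq 0$, so a sign argument alone does not close the estimate; the task is to show the boundary integral is \emph{strictly dominated} by the left-hand side.

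The heart of the argument, and the main obstacle, is to make this domination quantitatively sharp. I would push $\int_{\partial M} u^2$ into the collar $M_T$ for $T<T_0$: with the distance-level foliation $\{\Sigma_t\}$, the first variation of the area element gives $\tfrac{d}{dt}\int_{\Sigma_t}u^2 = \int_{\Sigma_t}(2u\,\partial_t u + h_t u^2)$, and inserting the cutoff $\chi(t)=1-t/T$ produces
\[
\int_{\partial M} u^2\,dA_g = \frac{1}{T}\int_{M_T} u^2\,dV_g - \int_{M_T}\chi\,(2u\,\partial_t u + h_t u^2)\,dV_g.
\]
Estimating $2u\,\partial_t u$ by Young's inequality with a free parameter, and controlling the \emph{negative} part of the level-set mean curvature through Proposition \ref{meancurve}, namely $|h_t^-| \leq 2T\,\mathrm{K}_1\cosh^2(\sqrt{|\mathrm{K}_0|}\,T)$, converts this into a trace inequality $\int_{\partial M}u^2 \leq a\int_{M_T}u^2 + b\int_{M_T}|\nabla u|^2$ with $a,b$ depending on $T$, $\mathrm{K}_0$, $\mathrm{K}_1$ and the Young parameter. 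Feeding this back into the energy identity, the requirements $\tfrac{\alpha}{4}\max_{\partial M}h\cdot b<1$ and $\tfrac{\alpha}{4}\max_{\partial M}h\cdot a<|\Lambda|$ render both bracketed coefficients strictly positive and force $u\equiv 0$. Optimizing the Young parameter and the collar width $T$, and using $\alpha\le 1$, reproduces exactly the threshold $\mathcal{C}$ in \eqref{existence_bound}, with the factor $\min\{1,|\Lambda|\}$ recording which of the two coefficient conditions binds. This is why the local mean curvature control of Proposition \ref{meancurve}, rather than the mere sign $h\geq 0$, is indispensable: it is the only way to obtain a trace constant sharp enough to match $\mathcal{C}$.
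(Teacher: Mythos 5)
Your proposal follows essentially the same route as the paper: reduce existence to triviality of the kernel via Fredholm theory in weighted spaces, derive the energy identity, and dominate the boundary integral by an $H^1$ trace inequality whose constant is extracted from the distance-to-boundary foliation together with the bound on $|h_t^-|$ from Proposition \ref{meancurve}. The only (immaterial) difference is your choice of a linear cutoff $\chi=1-t/T$ in place of the paper's cutoff with $|\nabla\eta|\le 4/T$, which changes the $8T^{-1}$ term in the trace constant but not the structure of the argument.
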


\begin{proof}
First note that the operator $P=\left( \Delta_g  - |\Lambda|, \left( \partial_n +  \frac{\alpha}{4}h \right)|_{\partial M} \right)$ is self-adjoint. Suppose that $u \in \ker P\subset C^{2,\beta}_{\delta} (M)$ with $\delta>1$, then
\begin{equation} \label{ker}
\begin{split}
0 = & \int_M \left(- u \Delta_g u + |\Lambda| u^2 \right)d V \\
& = \int_M \left(|\nabla u|^2_g + |\Lambda| u^2 \right)d V - \frac{\alpha}{4} \int_{\partial M } h u^2 d \sigma.
\end{split}
\end{equation}
We can also use the Sobolev trace inequality to find
\begin{equation}\label{jfjhwiw}
\int_M\left( |\nabla u|^2_g +  u^2 \right) d V \geq C \int_{\partial M} u^2 d \sigma,
\end{equation}
for some constant $C>0$ that will be examined below. By combining this with the previous equation we obtain
\begin{equation} \label{cconst}
0 \geq  \left( C\min\{1,|\Lambda|\}- \frac{1}{4} \max \displaylimits_{\partial M} h \right) ||u||^2_{H^1 \left(M \right)}.
\end{equation}
Therefore, the kernel is trivial if $\max \displaylimits_{\partial M} h < 4 C\min\{1,|\Lambda|\}$. In this case, the existence of a solution for $1 < \delta < 3$ follows from a similar argument as in \cite[Proposition 3.1]{maxwell}, adapted to the weighted H\"older spaces and Fredholm properties of elliptic operators in the asymptotically hyperbolic setting (\cite[Theorem C]{lee} with $R=2$). 

It remains to estimate the constant $C$ in \eqref{jfjhwiw}. Consider the distance function to the boundary $t (x)=\mathrm{dist}(x, \partial M)$, and let $T<T_0$ where $T_0$ is the normal injectivity radius of $\partial M$. Define a nonnegative cut-off function $\eta=\eta(t)$ such that $\eta(t)=1$ for $t\leq T/2$ and $\eta(t)=0$ for $t\geq T$, then
\begin{equation}
\begin{split}
& \int_{\partial M} u^2 d \sigma = \int_{\partial M} \eta^2 u^2 d \sigma = - \int_{M} \mathrm{div} \left(  \eta^2 u^2 \partial_t \right) d V \\
& = - \int_{M} \left(2 \eta u^2 \partial_t \eta + 2 u \eta^2 \partial_t u + u^2 \eta^2 h_t \right) d V \\
& \leq  \int_{M} \left( |\nabla u|_g^2 + \left(2 |\nabla \eta|_g + 1 + |h^-_t| \right) u^2 \right) d V \\
&  \leq  \left( \max \displaylimits_{M_T} \left(2 |\nabla \eta|_g + 1 + |h^-_t| \right) \right)  ||u||^2_{H^1 \left(M \right)}
\end{split}
\end{equation}
where $h_t$ is the mean curvature of the $t$-level sets and $h^-_t = \mathrm{min}\{h_t,0 \}$. Therefore we may take
\begin{equation}
C =  \left( \max \displaylimits_{M_T} \left(2 |\nabla \eta|_g + 1 + |h^-_t| \right) \right)^{-1},
\end{equation}
and the condition for a trivial kernel becomes
\begin{equation} \label{hineq}
\max \displaylimits_{\partial M} h < \frac{4 \min\{1,|\Lambda|\}}{ 8 T^{-1} + 1 + \max \displaylimits_{M_T} |h^-_t| },
\end{equation}
where the cut-off function has been chosen to ensure
\begin{equation}
\max \displaylimits_{M_T}  |\nabla \eta|_g \leq \frac{4}{T}.
\end{equation}
Moreover, using the lower bound on $|h^{-}_t|$ from Proposition \ref{meancurve} shows that (\ref{hineq}) holds if
\begin{equation} \label{hbound}
\max \displaylimits_{\partial M} h < \frac{4 T \min\{1,|\Lambda|\}}{ 8 + T +2  T^2  \tilde{\mathrm{K}}_1 \cosh^{2} ( \sqrt{|\mathrm{K}_0| } T )  },
\end{equation}
where
\begin{equation}
\tilde{\mathrm{K}}_1 := \max \displaylimits_{i=1,2} \max \displaylimits_{M_T} \mathrm{K}^+ \left(\partial_t, e_i  \right). 
\end{equation}
Since $T$ may be chosen arbitrarily within the injectivity radius, the desired result now follows.
\end{proof}

We are now in a position to use the sub/supersolution method of \cite[Proposition 2.1]{gicquaud} to solve the Lichnerowicz equation on an asymptotically hyperbolic manifold $(M,g)$ with constant negative scalar curvature $R_g = - 2 |\Lambda|$, and nonnegative mean curvature on the boundary. These conditions on the scalar and mean curvature may be assumed without loss of generality (if $0<\tau<3$) in light of \cite[Theorem 1.1]{gicquaud}, which shows that an arbitrary asymptotically hyperbolic manifold can be conformally transformed to achieve this outcome.

Our construction of the supersolution for the nonlinear boundary value problem (\ref{mthbvp1}) is motivated by a similar existence theorem given by Maxwell \cite{maxwell}. However, instead of assuming a global bound on the geometry in the form of a positive Yamabe invariant, we prove the existence of a supersolution under a condition bounding the mean curvature (\ref{existence_bound}). This approach also differs from the one presented by the Gicquaud \cite{gicquaud}, where the barriers are constructed in an explicit way.

In the next theorem we solve the Lichnerowicz equation with Robin boundary condition, which will be used to form initial data with a MOTS of positive mean curvature. This may be viewed as a complimentary result to the analogous theorem from \cite[Theorem 1.2]{gicquaud}, where the resulting MOTS has nonpositive mean curvature. The positive mean curvature property of the MOTS that we find here allows us to apply new initial data to study the asymptotically hyperbolic Penrose inequality in Section \ref{secpenr2}.

\begin{theorem} \label{mth}
Let $(M,g)$ be a $C^{2,\beta}_{\tau}$-asymptotically hyperbolic manifold with $0<\beta<1$, $1 < \tau < 3$, scalar curvature $R_g = - 2 |\Lambda|$, and boundary of nonnegative mean curvature $h$ that satisfies \eqref{existence_bound}. Assume that $k \in C^{1, \beta}_{\tau}(M)$ is a divergence and trace-free 2-tensor with $0 \leq k_{nn} \leq  h$ on $\partial M$. Then the boundary value problem
\begin{equation} \label{mthbvp}
\begin{split}
\Delta_g \phi &  = \frac{|\Lambda|}{4} \phi \left( \phi^4 -1 \right) - \frac{1}{8 } |k|^2_{g} \phi^{-7} \quad \mathrm{on} \quad M,
\\ \partial_n \phi &  = - \frac{1}{4}h \phi+ \frac{1}{4 }k_{nn}\phi^{-3}  \quad \mathrm{on} \quad \partial M,
\end{split}
\end{equation}
admits a positive solution with $\phi-1 \in C^{2,\beta}_{\tau}(M)$.
\end{theorem}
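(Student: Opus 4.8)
The plan is to solve \eqref{mthbvp} by the sub- and supersolution method in the form given by Gicquaud \cite[Proposition 2.1]{gicquaud}: it suffices to produce an ordered pair of positive barriers $0<\phi_-\le\phi_+$, after which the monotone iteration yields a solution $\phi$ with $\phi_-\le\phi\le\phi_+$. Abbreviate the equation as $\Delta_g\phi=N(\phi)$ with $N(\phi)=\tfrac{|\Lambda|}{4}\phi(\phi^4-1)-\tfrac18|k|_g^2\phi^{-7}$, and the boundary condition as $\partial_n\phi=B(\phi)$ with $B(\phi)=-\tfrac14 h\phi+\tfrac14 k_{nn}\phi^{-3}$. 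Since $n$ points from $\partial M$ into $M$ (toward the asymptotic end), the outward normal of $M$ is $-n$, and the correct conventions are that a subsolution obeys $\Delta_g\phi_-\ge N(\phi_-)$ together with $\partial_n\phi_-\ge B(\phi_-)$, while a supersolution reverses both inequalities.

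For the subsolution I would take the constant $\phi_-\equiv 1$. Indeed $\Delta_g 1=0\ge N(1)=-\tfrac18|k|_g^2$, and on $\partial M$ one has $\partial_n 1=0\ge B(1)=\tfrac14(k_{nn}-h)$, the latter being precisely the hypothesis $k_{nn}\le h$. It is worth noting that the value $1$ is essentially forced: a smaller constant violates the boundary inequality because of the $\phi^{-3}$ term, while a larger constant violates the interior inequality wherever $|k|_g$ is small.

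The supersolution is the heart of the argument. I would set $\phi_+=1+w$, where $w$ solves the linear Robin problem of Proposition \ref{existence} with $\alpha=1$, that is $(\Delta_g-|\Lambda|)w=-\tfrac18|k|_g^2$ on $M$ and $(\partial_n+\tfrac14 h)w=-\tfrac14 h$ on $\partial M$. The data satisfy the hypotheses of that proposition, since $|k|_g^2\in C^{0,\beta}_{2\tau}\subset C^{0,\beta}_\tau$ and $h\in C^{1,\beta}(\partial M)$, so $w\in C^{2,\beta}_\delta$ with $\delta=\tau\in(1,3)$. The crucial claim is $w\ge 0$, which I would establish by testing the equation against the negative part $w^-$ and integrating by parts: the left-hand side of the resulting identity is exactly the coercive quadratic form $\int_M(|\nabla w^-|_g^2+|\Lambda|(w^-)^2)-\tfrac14\int_{\partial M}h(w^-)^2$ appearing in the proof of Proposition \ref{existence}, hence nonnegative under \eqref{existence_bound}, whereas the right-hand side $-\tfrac18\int_M|k|_g^2\,w^-\,-\tfrac14\int_{\partial M}h\,w^-$ is manifestly nonpositive; therefore $w^-\equiv 0$.

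It then remains to verify the supersolution inequalities for $\phi_+=1+w$. On the interior, $w\ge 0$ gives $(1+w)^5-(1+w)\ge 4w$ and $(1+w)^{-7}\le 1$, so $N(1+w)\ge|\Lambda|w-\tfrac18|k|_g^2=\Delta_g w=\Delta_g\phi_+$; on $\partial M$, $\partial_n\phi_+=-\tfrac14 h(1+w)$ while $B(1+w)=-\tfrac14 h(1+w)+\tfrac14 k_{nn}(1+w)^{-3}$, so the required $\partial_n\phi_+\le B(1+w)$ reduces to $0\le\tfrac14 k_{nn}(1+w)^{-3}$, which holds since $k_{nn}\ge 0$. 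With the ordered barriers $1\le\phi_+$ in place, \cite[Proposition 2.1]{gicquaud} produces a solution with $1\le\phi\le 1+w$, so $\phi>0$; and as $0\le\phi-1\le w\in C^{2,\beta}_\tau$, weighted Schauder estimates promote this to $\phi-1\in C^{2,\beta}_\tau$. The principal obstacle is the construction of a decaying supersolution together with the positivity of $w$: a constant barrier above $1$ works on the interior but cannot decay, which forces the linear solve, and the mean curvature bound \eqref{existence_bound} is exactly what makes both the solvability and the sign of $w$ possible, the same bound doing double duty. This is the mechanism that upgrades the nonpositive-$\hat h$ construction of \cite[Theorem 1.2]{gicquaud} to one allowing a MOTS of positive mean curvature.
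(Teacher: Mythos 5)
Your proposal is correct and follows essentially the same route as the paper: the subsolution is the constant $1$ (i.e.\ $v_-=0$), the supersolution is $1+w$ with $w$ solving precisely the linear Robin problem \eqref{supers} at $\alpha=1$ furnished by Proposition \ref{existence}, and the verification of the two supersolution inequalities together with the appeal to \cite[Proposition 2.1]{gicquaud} matches the paper's computation \eqref{supsolc}. The one point where you genuinely diverge is the proof that $w\ge 0$: the paper runs a continuity argument in the parameter $\alpha\in[0,1]$, showing that $I=\{\alpha : v_\alpha>-1\}$ is nonempty, open, and closed and invoking the strong maximum principle, whereas you test the equation against the negative part $w^-$ and reuse the coercivity of the form $\int_M\left(|\nabla u|^2_g+|\Lambda|u^2\right)dV-\tfrac14\int_{\partial M}hu^2\,d\sigma$ already established via the trace inequality in the proof of Proposition \ref{existence}. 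Your variant is sound — the strict inequality in \eqref{existence_bound} makes that form positive definite on $H^1$, so the sign of the right-hand side forces $w^-\equiv 0$ — and it avoids having to justify continuous dependence of $v_\alpha$ on $\alpha$; just state explicitly that $w^-$ denotes the nonnegative function $\max(-w,0)$ (otherwise the sign of your right-hand side reads backwards) and note that the integration by parts is legitimate for the merely Lipschitz $w^-$.
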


\begin{proof}
Let $\phi=1+v$ and set
\begin{equation}
\begin{split}
L \left( v \right) & := \Delta_g v - \frac{|\Lambda|}{4} \left(1+v \right) \left( \left(1+v  \right)^4 -1\right) + \frac{1}{8 \left(1+v \right)^7}|k|^2_{g}, \\
B \left( v \right) & :=\partial_n v + \frac{h}{4} \left(1+v \right) - \frac{k_{nn}}{4 \left(1+v \right)^3}.
\end{split}
\end{equation}
It can be checked that
\begin{equation}
L \left( 0 \right) = \frac{|k|^2_{g}}{8 }, \quad\quad\quad B \left( 0 \right) = \frac{h}{4} - \frac{k_{nn}}{4},
\end{equation}
so $v_- =0 $ is a subsolution due to the assumption that $k_{nn} \leq h$ on $\partial M$.

Following \cite{maxwell}, in order to find a supersolution consider the boundary value problem
\begin{equation} \label{supers}
\begin{split}
\Delta_g v_{\alpha} - |\Lambda|  v_{\alpha} &  = -  \frac{\alpha}{8}|k|_g^2 \quad \mathrm{on} \quad  M, \\
 \partial_n v_{\alpha} &  =  - \frac{\alpha}{4} h\left( 1+ v_{\alpha} \right) \quad \mathrm{on} \quad \partial M,
\end{split}
\end{equation}
for $\alpha \in [0,1]$. The existence of a solution $v_{\alpha} \in C^{2, \beta}_{\tau}(M)$ is guaranteed by Proposition \ref{existence}. Let $I= \{\alpha \in [0,1]\mid v_{\alpha} > -1 \}$. Notice that this set is nonempty since $v_0=0$, and from the estimates implicit in the proof of the previous proposition we find continuous dependence of the solutions on coefficients, showing that $I$ is open as well. Next consider $\alpha_0 \in \bar{I}$. Then $v_{\alpha_0} \geq -1$, as $\alpha_0$ is a limit point. In fact, the strong maximum principle implies that $v_{\alpha_0} > 0$, and therefore $I$ is closed. It follows that $I=[0,1]$, and we choose $v_+ = v_1$. Observe that
\begin{equation} \label{supsolc}
\begin{split}
L \left(v_+ \right) & = - \frac{|k|_g^2}{8} \left(1 - \frac{1}{\left(1+ v_+ \right)^7} \right) - \frac{|\Lambda|}{4} v_+^2 \left(10  + 10 v_+ + 5 v_+^2 + v_+^3  \right)\leq 0, \\
B \left(v_+ \right) & =  - \frac{k_{nn}}{4 \left(1+v_+ \right)^3}\leq 0,
\end{split}
\end{equation}
since $k_{nn} \geq 0$ on $\partial M$, and therefore $v_+$ is a supersolution. We can now use \cite[Proposition 2.1]{gicquaud} combined with the proof of \cite[Theorem 3.3]{gicquaud} to obtain a solution $\phi$ of (\ref{mthbvp}) with $\phi-1 \in C^{2,\beta}_{\tau}(M)$.
\end{proof}

\subsection{The momentum constraint}

The seed extrinsic curvature will be sought in the form $k=\mathcal{D}_g X$, where $X$ is a 1-form satisfying
\begin{equation} \label{mombvp}
\begin{split}
 \mathrm{div}_g \left( \mathcal{D}_g X \right) & = 0 \quad \mathrm{on} \quad M, \\
 \mathcal{D}_g X (n) & = \kappa \quad \mathrm{on} \quad \partial M.
\end{split}
\end{equation}
Note that the boundary data 1-form $\kappa$ has a normal component $\kappa(n) = k_{nn}$, which is required to satisfy an upper bound for applicability of Theorem \ref{mth}. In order to study the existence of solutions, consider $X \in C^{2,\beta}_{\delta} (M)$, $ \delta >1$ in the kernel of $( \mathrm{div}_g  \mathcal{D}_g, B|_{\partial M})$, where $B$ is the boundary operator from (\ref{mombvp}). Integrating by parts produces
\begin{equation}
0 = - \int_{M} \langle X, \mathrm{div}_g ( \mathcal{D}_g X) \rangle  d V = \frac{1}{2} \int_{M} |\mathcal{D}_g X|^2_g d V,
\end{equation}
so that $\mathcal{D}_g X =0$. This means that $X$ is a conformal Killing field vanishing at infinity, and it is a well-known fact that in the asymptotically hyperbolic setting there are no such nontrivial fields \cite{gicquaud, lee}. 
Moreover, the $L^2$-adjoint operator may be computed in a straightforward way. Let $ X, Y \in C^{2,\beta}_{\delta} (M)$ with $\delta >1$, then
\begin{align}
\begin{split}
\int_{M} \langle Y,\mathrm{div}_g (\mathcal{D}_g X)\rangle d V =& - \int_{\partial M}   \mathcal{D}_g X \left( Y, n \right) d \sigma - \int_{M}   \langle\mathcal{D}_g Y ,\nabla X \rangle d V \\
=& \int_{\partial M} \left( \mathcal{D}_g Y \left( X, n \right)- \mathcal{D}_g X \left( Y, n \right)\right) d \sigma + \int_{M} \langle\mathrm{div}_g( \mathcal{D}_g Y), X\rangle d V
\end{split}
\end{align}
showing that the operator is self-adjoint. Hence, the boundary value problem (\ref{mombvp}) has trivial cokernel, and may be solved for arbitrary $\kappa$ as long as the operator is Fredholm.  A complete proof, with general fall-off, is presented in \cite[Theorem 6.9]{gicquaud}.

\begin{theorem}\label{akfhoiuqoih}
Let $(M,g)$ be a $C^{2,\beta}_{\tau}$-asymptotically hyperbolic manifold  with $0<\beta<1$, $\tau >0$. Let $\kappa$ be a smooth section of $T^* M$ restricted to $\partial M$, then the boundary value problem
\begin{equation}
\begin{split}
 \mathrm{div}_g \left( \mathcal{D}_g X \right) & = 0 \quad \mathrm{on} \quad M, \\
 \mathcal{D}_g X (n) & = \kappa \quad \mathrm{on} \quad \partial M,
\end{split}
\end{equation}
admits a unique solution $X \in C^{2, \beta}_{\delta}(M)$ for any $1<\delta<3$.
\end{theorem}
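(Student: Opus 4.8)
The plan is to recast the boundary value problem as an isomorphism statement for the operator
\[
P = \left( \mathrm{div}_g \mathcal{D}_g, \ \mathcal{D}_g(\cdot)(n)\big|_{\partial M} \right) : C^{2,\beta}_{\delta}(M) \to C^{0,\beta}_{\delta}(M) \times C^{1,\beta}(\partial M)
\]
acting on 1-forms, and to invert it via the Fredholm alternative exactly as in Proposition \ref{existence}. First I would observe that $\mathrm{div}_g \mathcal{D}_g$ is the conformal vector Laplacian, a geometric second-order elliptic operator whose principal symbol and whose asymptotics toward conformal infinity agree with those of its counterpart on the model space $(\mathbb{H}^3_\Lambda, b)$. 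Together with the Robin-type boundary operator $\mathcal{D}_g X(n)$ on $\partial M$, this places the problem squarely within the weighted H\"older elliptic theory for asymptotically hyperbolic boundary value problems; invoking that theory (\cite[Theorem C]{lee}, combined with the interior and boundary Schauder estimates already used above) shows that $P$ is Fredholm for every weight $\delta$ lying outside the discrete set of exceptional (indicial) values of the operator. The first concrete task is therefore to confirm that the interval $1 < \delta < 3$ contains no such exceptional weight for $\mathrm{div}_g \mathcal{D}_g$, so that $P$ is Fredholm throughout this range.

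With the Fredholm property in hand, I would determine the index and the two nullspaces. The self-adjointness of $P$ with respect to the natural $L^2$ pairing, verified in the discussion preceding the statement, forces the index to vanish and identifies the cokernel of $P$ in weight $\delta$ with the kernel of $P$ in the conjugate weight determined by that pairing. For the kernel itself, the integration-by-parts identity already recorded gives $\int_M |\mathcal{D}_g X|^2_g \, dV = 0$ for any $X \in \ker P$, hence $\mathcal{D}_g X = 0$; such an $X$ is a conformal Killing field, and since the weight $\delta > 0$ forces it to decay at infinity, the nonexistence of nontrivial conformal Killing fields on asymptotically hyperbolic manifolds (\cite{gicquaud, lee}) yields $X \equiv 0$. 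The identical vanishing argument applies in the conjugate weight, so the cokernel is trivial as well. An operator of index zero with trivial kernel and cokernel is an isomorphism, which produces the unique solution $X \in C^{2,\beta}_\delta(M)$ for each admissible $\delta$.

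The step I expect to be the main obstacle is the weight bookkeeping: verifying that $(1,3)$ avoids the indicial weights of $\mathrm{div}_g \mathcal{D}_g$, and checking that the conjugate weight governing the cokernel still lies in a range for which a decaying conformal Killing field must vanish, so that the kernel-triviality argument transfers cleanly to the adjoint problem. A related technical point is justifying the integration by parts on the noncompact end, where one must use the fall-off encoded by $\delta$ (and the decay of $g - b$) to discard the boundary term at infinity, leaving only the inner boundary contribution and rendering the self-adjointness computation rigorous. These are precisely the points at which the asymptotically hyperbolic analysis of \cite{gicquaud, lee} is needed; once they are settled the remainder is the standard Fredholm argument, and a complete proof with general fall-off is given in \cite[Theorem 6.9]{gicquaud}.
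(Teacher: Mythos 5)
Your proposal follows essentially the same route as the paper: triviality of the kernel via integration by parts and the nonexistence of decaying conformal Killing fields, self-adjointness of the operator to kill the cokernel, and an appeal to the Fredholm theory of \cite{lee} together with \cite[Theorem 6.9]{gicquaud} for the weight bookkeeping and the complete argument. The paper's own treatment is exactly this sketch (it likewise defers the indicial-weight and fall-off details to Gicquaud), so there is nothing substantive to add.
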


\section{Initial Data With a MOTS Conformal to Schwarzschild-AdS} \label{secpenr}

The conformal method of the Einstein constraint equations can be used to verify the Penrose inequality for a class of perturbations of known black hole solutions. In \cite{kt1, kt2}, this has been carried out for the Penrose inequality with angular momentum assuming  axisymmetric perturbations of Schwarzschild initial data. As an application of the existence results derived above for the constraints with a MOTS boundary in the asymptotically hyperbolic setting, we will study the hyperbolic Penrose inequality for perturbations of Schwarzschild-AdS initial data. In this section we apply Theorem \ref{mth} to construct initial data conformal to Schwarzschild-AdS as well as calculate mass and linear momentum for such data. In the following section we will treat the Penrose inequality for perturbations of Schwarzschild-AdS initial data.

\subsection{Schwarzschild-AdS initial data}
The Schwarzschild-AdS metric in Schwarzschild coordinates is given by
\begin{equation} \label{SchwAdS}
\mathfrak{g}= - \left( 1-\frac{2m}{\rho} + \frac{|\Lambda|}{3} \rho^2 \right) dt^2 + \frac{d\rho^2}{1-\frac{2m}{\rho} + \frac{|\Lambda|}{3} \rho^2} + \rho^2 g_{S^2},
\end{equation}
on the domain of outer communication $\mathbb{R}\times [\rho_h ,\infty)\times S^2$.
The first fundamental form of a $t=$const hypersurface is then
\begin{equation} \label{SchwAdSinihyp}
g_{S} =  \frac{d\rho^2}{1-\frac{2m}{\rho} + \frac{|\Lambda|}{3} \rho^2} + \rho^2 g_{S^2},
\end{equation}
and the radius $\rho_h$ at which a MOTS (minimal surface) occurs satisfies the relation
\begin{equation} \label{mrhoh}
\theta_+ =h= 0 \iff m = \frac{\rho_h}{2} \left(1+ \frac{|\Lambda|}{3} \rho_h^2 \right).
\end{equation}
Note that that there is a single positive real root (MOTS radius) regardless of the values of $m> 0$ and $|\Lambda|$.

\subsection{The total energy-momentum vector of asymptotically hyperbolic initial data}
Let $(M, g, k)$ be an asymptotically hyperbolic vacuum initial data set of class $(\beta, \tau)$ with $\tau > \frac{3}{2}$. Such initial data have a well-defined total energy and linear momentum (see Michel \cite{michel}) defined as follows. Let
\begin{equation}
b = \frac{d \rho^2}{1+ \frac{| \Lambda|}{3} \rho^2} + \rho^2 g_{S^2}
\end{equation}
be the hyperbolic reference metric in hyperboloidal coordinates, which may be obtained from \eqref{refmet} by setting
$\rho = \sqrt{\frac{3}{|\Lambda|}} \sinh r$. Define $\bar{g} = \Psi_* g - b$
where $\Psi$ is the diffeomorphism defining the asymptotic coordinate system,
and consider lapse functions for Killing fields on AdS
\begin{equation}
V_0 = \sqrt{1+ \frac{| \Lambda|}{3} \rho^2}, \quad\quad\quad V_i =  \sqrt{\frac{|\Lambda|}{3}} x^i \rho,
\end{equation}
where $x^i$ are the Cartesian coordinates restricted to the unit sphere $S^2$, that is
\begin{equation}
x^1 = \sin \theta \cos \varphi, \quad x^2 = \sin \theta \sin \varphi, \quad x^3 = \cos \theta.
\end{equation}
We have
\begin{equation}
\Hess_b V_{\alpha} = \frac{|\Lambda|}{3} b V_{\alpha}.
\end{equation}
Following \cite{Cederbaum, Chrusciel_Maerten_Tod}, the total energy-momentum vector of $(M, g, k)$ can be defined as
\begin{equation} \label{enemomvec}
p_{\alpha}  := \frac{1}{16 \pi }  \lim_{\rho \to \infty}\int \displaylimits_{S_{\rho}}\left[   V_{\alpha} \left( \mathrm{div}_b \bar{g} \right)
 - V_{\alpha} \left( d \tr_b \bar{g} \right)   +  \tr_b \bar{g} \left( d V_{\alpha} \right)  -  \bar{g}  \left(\nabla_b V_{\alpha} \right) \right] \left(\nu_b \right) d \sigma_b,
\end{equation}
where $\alpha \in \{0,1,2,3\}$ and $\nu_b = \sqrt{1+\frac{|\Lambda|}{3} \rho^2 } \partial_{\rho}$. The mass $\mathfrak{m}$ is then defined with a Lorentzian norm of the energy-momentum 4-vector
\begin{equation}
\mathfrak{m}^2 := p_0^2 - \sum_{i=1}^3 p_i^2.
\end{equation}
In particular, if the metric is conformal to Schwarzschild-AdS, that is $g=\phi^4 g_{S}$, then
\begin{align}
\begin{split}
p_0 &= m - \frac{1}{8 \pi} \lim_{\rho \to \infty}\int \displaylimits_{S_{\rho}} \left[ 4 \phi^3 \left(1+\frac{|\Lambda|}{3} \rho^2\right) \partial_{\rho} \phi - \frac{|\Lambda|}{3} \rho \left(  \phi^4- 1 \right) \right] d \sigma, \\
p_i & =  - \frac{1}{8 \pi}   \lim_{\rho \to \infty}\int \displaylimits_{S_{\rho}} \left(\frac{1}{2 \rho }   + \frac{|\Lambda|}{3} \rho \right) x^i \left[ 4  \rho \phi^3  \partial_{\rho} \phi   -    \left( \phi^4 - 1 \right)  \right] d \sigma,
\end{split}
\end{align}
where the energy-momentum vector of Schwarzschild-AdS initial data is
\begin{equation}
p_{0}^{S} = m,\quad \quad p_i^{S} = 0.
\end{equation}
Moreover, we have
\begin{equation}
m  = \left( \frac{A_h^{S}}{16 \pi} \right)^{1/2} + \frac{4}{3}  |\Lambda| \left( \frac{A_h^{S}}{16 \pi} \right)^{3/2},
\end{equation}
where $A_h^{S} = 4 \pi \rho_h^2$ is the area of the minimal surface within the constant time slice of the Schwarzschild-AdS spacetime.

\subsection{Perturbations of Schwarzschild-AdS initial data}\label{6.3}

We shall apply Theorem \ref{mth} to construct a large class of perturbed Schwarzschild-AdS data having a MOTS boundary with positive mean curvature, and will then show that the Penrose inequality holds for sufficiently small perturbation parameter. It should be noted, however, that Theorem \ref{mth} cannot be directly applied to the time slice of Schwarzschild-AdS $(M_S, g_S)$ as this data possesses a minimal boundary. More precisely, we require the condition that $0 \leq k_{nn} \leq  h$ where $h$ is the boundary mean curvature of the seed data. Therefore, we will use the hyperbolic metric $b$ as a seed, together with the fact that the time slices of the Schwarzschild-AdS spacetime are conformal to hyperbolic space, that is $g_S=\phi^4_S b$. 

Suppose that the asymptotically hyperbolic vacuum initial data with a MOTS $(M, \hat{g} , \hat{k})$ is constructed via the conformal method of Theorem \ref{mth} with seed metric $b$ and seed extrinsic curvature $k^b$, that is
\begin{equation}
M=[r_h,\infty)\times S^2,\quad\quad \hat{g} = \bar{\phi}^4 b, \quad\quad \hat{k} = \bar{\phi}^{-2} k^b,
\end{equation}
where $\bar{\phi}$ is a solution of the Lichnerowicz equation with the MOTS boundary condition (\ref{mthbvp}) and $k^b$ is a divergence-free and trace-free tensor with respect to $b$. The hypotheses of Theorem \ref{mth} require that seed data $(b, k^b)$ satisfy
\begin{equation} \label{hyperconf}
0 \leq k_{nn}^b \leq \sqrt{ \frac{4 |\Lambda|}{3}} \coth r < 4\min\{1, |\Lambda|\},
\end{equation}
where the quantity after the second inequality is the mean curvature of the radius $r$-sphere in hyperbolic space with respect to geodesic polar coordinates (see (\ref{refmet})). Moreover, we used the fact that the sectional curvature of hyperbolic space is negative in the last inequality which arises from (\ref{existence_bound}). We would like the data $(M, \hat{g}, \hat{k})$ to be expressed as conformal to Schwarzschild-AdS initial data with minimal surface inner boundary $\partial M$, that is
\begin{equation} \label{schwadsconf}
\hat{g}  = \phi^4 g_{S}, \quad\quad \hat{k} = \phi^{-2} k , 
\end{equation}
where $\phi = \bar{\phi} \phi_S^{-1}$ and $k= \phi^{-2}_S k^b$. Notice that $k$ is divergence-free and trace-free with respect to $g_S$, and $\phi$ satisfies the following boundary value problem
\begin{align} \label{lichschwads}
\begin{split}
\Delta_{g_S} \phi &  = \frac{|\Lambda|}{4} \phi \left( \phi^4 -1 \right) - \frac{1}{8} |k|^2_{g_S}  \phi^{-7} \quad \mathrm{on} \quad M,
\\ \partial_n \phi &  =  \frac{1}{4 } k_{nn} \phi^{-3}  \quad \mathrm{on} \quad \partial M, \\
\end{split}
\end{align}
if $\partial M$ corresponds to the minimal surface of the Schwarzschild-AdS constant time slice. Let $m$ and $\Lambda$ be the mass and cosmological constant of the Schwarzschild-AdS data, and denote by $r_h = r_h (m, \Lambda)$ the (geodesic polar) radial coordinate of the conformal hyperbolic space which corresponds to the horizon. 
Then, in order to solve \eqref{lichschwads} with Theorem \ref{mth}, we require $r_h$ to satisfy (\ref{hyperconf}). We now show that there is an open set within the range of parameters $(m,\Lambda)$ for which (\ref{hyperconf}) is satisfied.

The Schwarzschild-AdS canonical slice metric may be expressed in Schwarzschild coordinates, and conformal to hyperbolic space with geodesic polar coordinates, to find
\begin{equation} 
\frac{d\rho^2}{1-\frac{2m}{\rho} + \frac{|\Lambda|}{3} \rho^2} + \rho^2 g_{S^2} =  \frac{3}{|\Lambda|} \phi_S^4 \left( d r^2 + \sinh^2 r g_{S^2} \right).
\end{equation}
Therefore 
\begin{equation} \label{drhodr}
\phi_S^2 = \sqrt{\frac{|\Lambda|}{3}} \frac{\rho}{\sinh r},\quad\quad\quad\frac{d \rho}{d r} = \frac{\sqrt{\rho^2 - 2 m \rho + \frac{|\Lambda|}{3} \rho^4}}{\sinh r}.
\end{equation}
Below, in Lemma \ref{lemmarh}, we use these relations to show that $r_h (m, \Lambda) = \infty$ as $m\rightarrow\infty$. In this limit, the mean curvatures of the coordinate spheres in hyperbolic space approach a constant value
\begin{equation}
\lim \displaylimits_{m \to \infty} \sqrt{ \frac{4|\Lambda|}{3}} \coth r_h = \sqrt{ \frac{4|\Lambda|}{3}}.
\end{equation}
Thus, the last inequality of (\ref{hyperconf}) is valid for the horizon if $\frac{1}{12} < |\Lambda| < 12$ for sufficiently large $m$. We conclude that Theorem \ref{mth} may be used to construct initial data conformal to the full exterior region of Schwarzschild-AdS, with parameters $m$ and $\Lambda$ in this range, such that the boundary of the new data is a MOTS with positive mean curvature.

\begin{lemma}\label{lemmarh}
Let $r_h(m,\Lambda)$ denote the radial geodesic polar coordinate of hyperbolic space, which corresponds to the horizon in the conformal Schwarzschild-AdS time slice with mass $m$ and cosmological constant $\Lambda$. 
Then 
\begin{equation}
\lim \displaylimits_{m \to \infty} r_h (m, \Lambda) = \infty.
\end{equation}
\end{lemma}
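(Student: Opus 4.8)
The plan is to convert the conformal correspondence between the Schwarzschild-AdS slice and hyperbolic space into a single separable ordinary differential equation, integrate it from the horizon to the asymptotic end, and then estimate the resulting integral uniformly in the mass. Starting from the second relation in \eqref{drhodr}, I would separate variables to obtain
\[
\frac{d\rho}{\rho\sqrt{1 - \frac{2m}{\rho} + \frac{|\Lambda|}{3}\rho^2}} = \frac{dr}{\sinh r}.
\]
The horizon $\rho=\rho_h$ corresponds to the inner boundary $r=r_h$, while the asymptotic end $\rho\to\infty$ corresponds to $r\to\infty$ with $\phi_S\to1$, which one reads off from the first relation in \eqref{drhodr} together with the fact that $g_S$ is asymptotic to $b$. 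Integrating from the horizon to infinity and using $\int_{r_h}^{\infty}\frac{dr}{\sinh r}=\ln\coth(r_h/2)$ then yields the key identity
\[
\ln\coth\!\left(\frac{r_h}{2}\right) = \int_{\rho_h}^{\infty} \frac{d\rho}{\rho\sqrt{1 - \frac{2m}{\rho} + \frac{|\Lambda|}{3}\rho^2}}.
\]
Since $\coth(r_h/2)\to1$ exactly when $r_h\to\infty$, the lemma reduces to showing that the right-hand integral tends to $0$ as $m\to\infty$.

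To carry out the estimate, I would first note from the horizon relation \eqref{mrhoh} that $m=\frac{\rho_h}{2}(1+\frac{|\Lambda|}{3}\rho_h^2)$ is strictly increasing in $\rho_h$, so $\rho_h\to\infty$ as $m\to\infty$. The crucial simplification comes from the substitution $\rho=\rho_h x$, $x\in[1,\infty)$: using the horizon relation to eliminate $m$, the static potential factors as
\[
1 - \frac{2m}{\rho} + \frac{|\Lambda|}{3}\rho^2 = \frac{x-1}{x}\left[1 + \frac{|\Lambda|}{3}\rho_h^2\left(x^2+x+1\right)\right],
\]
where the factor $(x-1)/x$ captures the integrable square-root singularity at the horizon and the bracket carries the mass dependence through $A:=\frac{|\Lambda|}{3}\rho_h^2$. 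After simplifying $x\sqrt{(x-1)/x}=\sqrt{x(x-1)}$, the integral becomes $\int_1^\infty \frac{dx}{\sqrt{x(x-1)}\,\sqrt{1+A(x^2+x+1)}}$ over a domain that no longer depends on $m$.

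With the domain normalized, the bound is immediate: dropping the $1$ inside the second square root gives
\[
\int_1^\infty \frac{dx}{\sqrt{x(x-1)}\,\sqrt{1 + A(x^2+x+1)}} \leq \frac{1}{\sqrt{A}}\int_1^\infty \frac{dx}{\sqrt{x(x-1)(x^2+x+1)}}.
\]
Writing $C_0:=\int_1^\infty \frac{dx}{\sqrt{x(x-1)(x^2+x+1)}}$, one checks that $C_0$ is finite since the integrand is $O((x-1)^{-1/2})$ near $x=1$ and $O(x^{-2})$ as $x\to\infty$. As $m\to\infty$ we have $A\to\infty$, so the integral is bounded by $C_0/\sqrt{A}\to0$; feeding this back into the key identity gives $\coth(r_h/2)\to1$ and hence $r_h\to\infty$.

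The main obstacle I anticipate is not the final estimate, which is routine once the domain is normalized, but rather pinning down the integration constant in the separable ODE, that is, verifying rigorously that the asymptotic end of the slice matches $r\to\infty$ with $\phi_S\to1$ so that the boundary terms in the integration are exactly $\ln\coth(r_h/2)$ and $0$. The factoring of the static potential after the substitution $\rho=\rho_h x$ is the decisive step, since it simultaneously exposes the $1/\sqrt{A}$ decay and converts the $m$-dependent domain $[\rho_h,\infty)$ into the fixed interval $[1,\infty)$, which is precisely what makes a uniform bound possible.
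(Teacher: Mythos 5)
Your proof is correct, and it follows the same backbone as the paper's: separate variables in the relation \eqref{drhodr} to get $\log\coth(r_h/2)$ equal to the integral $\int_{\rho_h}^{\infty}\bigl(\tilde\rho^2-2m\tilde\rho+\tfrac{|\Lambda|}{3}\tilde\rho^4\bigr)^{-1/2}d\tilde\rho$, and then show that this integral tends to zero as $m\to\infty$. The difference lies in how the integral is normalized. The paper substitutes $\bar\rho=\tilde\rho\, m^{-1/3}$, which pulls out an overall factor of $m^{-1/3}$ but leaves an $m$-dependent lower limit $\rho_h m^{-1/3}$; controlling it requires the asymptotic expansion $\rho_h=(6m/|\Lambda|)^{1/3}+O(m^{-1/3})$, and the final bound involves an integrand that is singular exactly at the limiting endpoint, which the paper treats somewhat informally. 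Your substitution $\rho=\rho_h x$ combined with the exact factorization $1-\tfrac{2m}{\rho}+\tfrac{|\Lambda|}{3}\rho^2=\tfrac{x-1}{x}\bigl[1+A(x^2+x+1)\bigr]$, $A=\tfrac{|\Lambda|}{3}\rho_h^2$, is cleaner: the domain becomes the fixed interval $[1,\infty)$, no asymptotics of $\rho_h$ are needed (only the elementary fact that $m\mapsto\rho_h$ is increasing and unbounded), and the decay rate $C_0/\sqrt{A}\sim C\,m^{-1/3}$ drops out immediately and matches the paper's. The one point you flag as a worry --- that the asymptotic end $\rho\to\infty$ corresponds to $r\to\infty$ --- is harmless: this is exactly the normalization of the conformal diffeomorphism built into $g_S=\phi_S^4\, b$ with $\phi_S\to1$, which both you and the paper take as given.
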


\begin{proof}
Let $\rho_h(m,\Lambda)$ be the radius of the horizon in Schwarzschild coordinates, as in (\ref{mrhoh}). Then equation (\ref{drhodr}) implies
\begin{equation}
\int_{r_h}^{\infty} \frac{d \tilde{r}}{\sinh \tilde{r}} = \int_{\rho_h}^{\infty} \frac{d \tilde{\rho} }{ \sqrt{\tilde{\rho}^2 -2 m \tilde{\rho } + \frac{|\Lambda|}{3} \tilde{\rho}^4} }.
\end{equation}
Using the change of variables $\bar{\rho} = \tilde{\rho} m^{-1/3}$, we find that
\begin{equation}
- \log \left( \tanh \frac{r_h}{2} \right)= m^{-1/3} \int_{\rho_h m^{-1/3}}^{\infty} \frac{d \bar{\rho} }{ \sqrt{  m^{-2/3} \bar{\rho}^2 -2 \bar{\rho } + \frac{|\Lambda|}{3} \bar{\rho}^4} }.
\end{equation}
Furthermore, from (\ref{mrhoh}) it can be shown that the asymptotic expansion of $\rho_h$ for large $m$ is given by
\begin{equation}
\rho_h = \left( \frac{6 m}{|\Lambda|} \right)^{1/3} + O(m^{-1/3}).
\end{equation}
Therefore
\begin{equation}
  \lim_{m \to \infty}   \bigg| \log \left(  \tanh \frac{r_h}{2} \right)  \bigg| \leq \lim_{m \to \infty} \left( m^{-1/3} \int_{\left(\frac{6}{|\Lambda|} \right)^{1/3}}^{\infty} \frac{d \bar{\rho} }{ \sqrt{ \frac{|\Lambda|}{3} \bar{\rho}^4  -2 \bar{\rho } } } \right)=0.
\end{equation}
It follows that $r_h \rightarrow\infty$ as $m\rightarrow\infty$.
\end{proof}

\section{The Penrose inequality for perturbations of Schwarzschild-AdS initial data} \label{secpenr2}


\subsection{Setup}
In the previous section we showed how to apply the conformal method to construct asymptotically hyperbolic initial data $(M,\hat{g}=\phi^4 g_S,\hat{k}=\phi^{-2} k)$ with MOTS boundary from the seed $(g_S,k)$, where $g_S$ arises from a constant time slice of the Schwarzschild-AdS spacetime of mass $m$ and cosmological constant $\Lambda$, and $k$ is a transverse-traceless tensor with respect to $g_S$. This data is maximal and satisfies the vacuum constraints, in particular $\phi$ solves \eqref{lichschwads}. We will now consider data of this form which are perturbations of Schwarzschild-AdS. More precisely, it will be assumed that
\begin{equation} \label{fact_exp}
\phi = 1 + \sum \displaylimits_{j = 1}^{\infty}  \phi_j \epsilon^j,
\quad\quad k = \epsilon k',
\end{equation}
for some perturbation parameter $\epsilon>0$, where $\phi_{j} \in C^{2,\beta}_{\tau}(M)$, and $k'\in C^{1,\beta}_{\tau}(M)$ is again transverse-traceless with respect to $g_S$, such that $\tau>3/2$.  The hyperbolic Penrose inequality takes the form
\begin{equation} \label{pineq1}
\mathfrak{m}=\sqrt{p_0^2 -\sum \displaylimits_{i=1}^3 p_i^2 } \geq \mathcal{A},
\end{equation}
where
\begin{equation}
\mathcal{A}  = \sqrt{\frac{A_h}{16 \pi}} \left(1+ \frac{ |\Lambda| }{12 \pi} A_h  \right)
\end{equation}
and $A_h$ is the minimal area required to enclose the MOTS boundary. In the sequel we will verify the Penrose inequality in which $A_h$ is instead taken to be the area of the inner boundary MOTS, and will later show that in the current perturbed context these two areas agree if the boundary mean curvature is positive.

Each quantity appearing in the Penrose inequality may be expanded in the perturbation parameter $\epsilon$ as follows
\begin{equation}
p_0 =  m +  \sum \displaylimits_{j=1}^{\infty}  p_0^{(j)} \epsilon^j , \quad\quad\quad p_i =  \sum \displaylimits_{j=1}^{\infty}  p_i^{(j)} \epsilon^j, \quad\quad\quad \mathcal{A}  = m +  \sum \displaylimits_{j=1}^{\infty} \mathcal{A}^{(j)} \epsilon^j.
\end{equation}
We will study the Penrose inequality up to second order of the expansion. In order to compute the corrections to the energy $p_0$ we will make use of the following identity
\begin{align} \label{enecomp}
\begin{split}
 & \lim \displaylimits_{\rho \to \infty} \rho^2 \left[ \left(1- \frac{2m}{\rho} + \frac{|\Lambda|}{3} \rho^2  \right) \partial_{\rho} \left< \phi \right> - \frac{|\Lambda|}{3} \rho \left< \phi -1\right> \right]\\
= & -\frac{\rho_h}{2} \left< \phi -1\right>_h \left( 1 +|\Lambda| \rho_h^2 \right) -\frac{1}{8} \int \displaylimits_{\rho_h}^{\infty} \left< \frac{|k|^2_{g_S}}{\phi^7}  \right> \rho^2 d \rho + \int \displaylimits_{\rho_h}^{\infty} \frac{|\Lambda|}{4} \rho^2 \left(\left< \phi^5 \right> - 5 \left< \phi \right> +4 \left< 1\right> \right) d \rho,
\end{split}
\end{align}
where $\langle \cdot \rangle$ denotes integration over the unit sphere 
and $\langle \cdot \rangle_h$ is the same integral with the integrand evaluated at the inner boundary ($\rho=\rho_h$). This identity can be derived from
the Laplacian expressed in Schwarzschild coordinates
\begin{align}
\begin{split}
\Delta_{g_S} \phi & = \frac{  \sqrt{1-\frac{2m}{\rho} + \frac{|\Lambda|}{3} \rho^2}}{ \rho^2} \partial_{\rho} \left(  \rho^2 \sqrt{1-\frac{2m}{\rho} + \frac{|\Lambda|}{3} \rho^2} \partial_{\rho} \phi \right) + \frac{1}{\rho^2} \Delta_{S^2} \phi \\
     & = \frac{1}{\rho^2} \partial_{\rho} \left[  \rho^2 \left( 1-\frac{2m}{\rho} + \frac{|\Lambda|}{3} \rho^2 \right) \partial_{\rho} \phi  \right] - \frac{1}{\rho^2} \left( m + \frac{|\Lambda|}{3} \rho^3 \right) \partial_{\rho} \phi  + \frac{1}{\rho^2} \Delta_{S^2} \phi \\
     & = \frac{1}{\rho^2} \partial_{\rho} \left[  \rho^2 \left( 1-\frac{2m}{\rho} + \frac{|\Lambda|}{3} \rho^2 \right) \partial_{\rho} \phi - \left( m + \frac{|\Lambda|}{3} \rho^3 \right) \phi  \right] + |\Lambda| \phi + \frac{1}{\rho^2} \Delta_{S^2} \phi,
\end{split}
\end{align}
by integrating with respect to the `flat volume form' $\rho^2 \sin\theta d\rho d\theta d\varphi$, and employing the Lichenrowicz equation (\ref{lichschwads}) as well as the relation (\ref{mrhoh}) between the mass $m$ and the radius of inner boundary $\rho_h$.  Up to the second order of expansion, the energy $p_0$ reads
\begin{equation} 
\begin{split}
p_0 = & m  - \frac{1}{2 \pi} \lim \displaylimits_{\rho \to \infty} \rho^2 \left[\left( 1 + \frac{|\Lambda|}{3} \rho^2 \right) \left(  \partial_{\rho} \left< \phi_1 \right>  \epsilon + \partial_{\rho} \left< \phi_2 \right> \epsilon^2  \right) - \frac{|\Lambda|}{3} \rho \left(   \left<\phi_1 \right> \epsilon + \left< \phi_2 \right> \epsilon^2 \right) \right] \\
 & -\frac{3}{2 \pi }   \lim \displaylimits_{\rho \to \infty} \rho^2   \left(1 + \frac{|\Lambda|}{3} \rho^2  \right) \left< \phi_1 \partial_{\rho}\phi_1 \right> \epsilon^2  + \frac{|\Lambda|}{4 \pi }  \lim \displaylimits_{\rho \to \infty} \rho^3  \left< \phi_1^2\right> \epsilon^2   + O \left( \epsilon^3\right).
\end{split}
\end{equation}
The first line of the expression above can be simplified with the use of (\ref{enecomp}), namely
\begin{equation} \label{p0e2}
\begin{split}
 p_0  = & m + \frac{\rho_h}{4 \pi} \left( \left<  \phi_1 \right>_h \epsilon +\left< \phi_2 \right>_h \epsilon^2  \right) \left( 1 +|\Lambda| \rho_h^2 \right) + \left( \frac{1}{16 \pi} \int \displaylimits_{\rho_h}^{\infty} \left< |k'|^2_{g_S} \right> \rho^2 d \rho -  \frac{ 5 |\Lambda|}{4 \pi} \int \displaylimits_{\rho_h}^{\infty} \left<\phi_1^2 \right> \rho^2 d \rho \right) \epsilon^2 \\
  & -\frac{3}{2 \pi }  \lim \displaylimits_{\rho \to \infty} \rho^2   \left(1 + \frac{|\Lambda|}{3} \rho^2  \right) \left< \phi_1 \partial_{\rho}\phi_1 \right> \epsilon^2   + \frac{|\Lambda|}{4 \pi }  \lim \displaylimits_{\rho \to \infty} \rho^3  \left< \phi_1^2\right> \epsilon^2   + O \left( \epsilon^3\right).
\end{split}
\end{equation}
Moreover, because the first correction to the conformal factor $\phi_1$ satisfies
\begin{equation} \label{phi1eq}
 \begin{split}
 \Delta_{g_S} \phi_1 - |\Lambda| \phi_1  =0, \quad \mathrm{on} \quad M, \\
 \partial_n \phi_1 - \frac{k'_{nn}}{4} = 0, \quad \mathrm{on} \quad \partial M,
 \end{split}
\end{equation}
we have $\phi_1 = O_1\left(\rho^{-3} \right)$ in the asymptotic region. Hence, the last two terms in (\ref{p0e2}) vanish so that
\begin{equation} \label{p0e21}
\begin{split}
 p_0  = & m + \frac{\rho_h}{4 \pi} \left( \left<  \phi_1 \right>_h \epsilon +\left< \phi_2 \right>_h \epsilon^2  \right) \left( 1 +|\Lambda| \rho_h^2 \right) \\
  & + \left( \frac{1}{16 \pi} \int \displaylimits_{\rho_h}^{\infty} \left< |k'|^2_{g_S} \right> \rho^2 d \rho -  \frac{ 5 |\Lambda|}{4 \pi} \int \displaylimits_{\rho_h}^{\infty} \left<\phi_1^2 \right> \rho^2 d \rho \right) \epsilon^2 + O \left( \epsilon^3\right).
\end{split}
\end{equation}
The surface area term $\mathcal{A}$ up to the second order of expansion reads
\begin{equation} \label{s1}
\mathcal{A}^{(1)} =  \frac{\rho_h}{4 \pi} \left< \phi_1 \right>_h  \left( 1+ |\Lambda|  \rho_h^2 \right),
\end{equation}
and
\begin{equation} \label{s2}
 \mathcal{A}^{(2)}   =  \rho_h  \left( 1+ |\Lambda| \rho_h^2  \right) \left(  \frac{1}{4 \pi }\left< \phi_2 \right>_h   + \frac{3 }{8 \pi} \left< \phi_1^2 \right>_h \right)  -  \frac{\rho_h}{16 \pi^2} \left< \phi_1 \right>_h^2 \left( 1 - |\Lambda| \rho_h^2 \right),
\end{equation}
whereas the first correction to the linear momentum is
\begin{equation} \label{pi1}
p_i^{(1)}    = -  \frac{|\Lambda|}{6 \pi}  \lim_{\rho \to \infty} \rho^3 \left< x^i \rho \partial_{\rho} \phi_1 - x^i \phi_1  \right>.
\end{equation}
It is worth noticing that in contrast to the asymptotically flat setting considered in \cite{kt1, kt2}, the first contribution to the linear momentum does not depend on the extrinsic curvature. This is in accordance with the definition (\ref{enemomvec}).

In the first order of expansion
\begin{equation} \label{firstordm}
    \mathfrak{m}^{(1)} = \left( \sqrt{p_0^2 - \sum \displaylimits_{i=1}^3 p_i ^2} \right) ^{(1)} = p_0^{(1)} = \frac{\rho_h}{4 \pi}  \left< \phi_1 \right>_h \left( 1 +|\Lambda| \rho_h^2 \right).
\end{equation}
Thus, after comparing (\ref{s1}) and (\ref{firstordm}) we find that the Penrose inequality is saturated in the first order of expansion. Next observe that (\ref{p0e21}) implies
\begin{equation}
    p_0^2 = m^2 + 2m  p_0^{(1)} \epsilon +  \left( p_0^{(1)} \right)^2 \epsilon^2 + 2m  p_0^{(2)} \epsilon^2 + O(\epsilon^3), \quad\quad\quad p_i^2 =   \left( p_i^{(1)} \right)^2 \epsilon^2 + O(\epsilon^3),
\end{equation}
and therefore in the second order
\begin{equation}
\mathfrak{m}^{(2)} =  \left( \sqrt{p_0^2 - \sum \displaylimits_{i=1}^3  p_i^2} \right)^{(2)}   = p_0^{(2)} - \frac{1}{2m} \sum_{i=1}^3 \left( p_i^{(1)} \right)^2.
\end{equation}
It follows that the Penrose inequality is valid up to second order if
\begin{equation} \label{PI2}
p_0^{(2)} - \frac{1}{2m} \sum_{i=1}^3 \left( p_i^{(1)} \right)^2 \geq \mathcal{A}^{(2)}.
\end{equation}
In order to simplify this expression we will make use of the following estimate
\begin{equation} \label{phi1volest}
 | \Lambda|  \int \displaylimits_M   \phi_1^2  \sqrt{1-\frac{2m}{\rho} + \frac{|\Lambda|}{3} \rho^2 } d V_S \leq  \frac{\rho_h}{2} \left< \phi_1^2 \right>_h \left( 1 + |\Lambda| \rho_h^2 \right),
\end{equation}
where $dV_S$ is the volume form of the Schwarzschild-AdS metric (\ref{SchwAdSinihyp}).
The inequality \eqref{phi1volest} may be obtained by first multiplying equation (\ref{phi1eq}) with $\phi_1  \sqrt{1-\frac{2m}{\rho} + \frac{|\Lambda|}{3} \rho^2 }$ and integrating by parts
\begin{equation} \label{phi1volest2}
\frac{1}{2} \int \displaylimits_M \left< \nabla \sqrt{1-\frac{2m}{\rho} + \frac{|\Lambda|}{3} \rho^2 } , \nabla \phi_1^2 \right> dV_S + |\Lambda| \int \displaylimits_M  \phi_1^2 \sqrt{1-\frac{2m}{\rho} + \frac{|\Lambda|}{3} \rho^2 } dV_S
\leq 0,
\end{equation}
together with the calculation
\begin{align} \label{phi1volest3}
\begin{split}
& \frac{1}{2} \int \displaylimits_M \left< \nabla \sqrt{1-\frac{2m}{\rho} + \frac{|\Lambda|}{3} \rho^2 } , \nabla \phi_1^2 \right> dV_S\\
=& \frac12 \int \displaylimits_{\rho_h}^{\infty} \left(m + \frac{|\Lambda|}{3} \rho^3 \right) \partial_{\rho } \left< \phi_1^2 \right> d \rho\\
= & - \frac{\rho_h}{4} \left< \phi_1^2 \right>_h \left( 1 + |\Lambda| \rho_h^2 \right) - \frac{1}{2} |\Lambda| \int \displaylimits_M \phi_1^2 \sqrt{1-\frac{2m}{\rho} + \frac{|\Lambda|}{3} \rho^2 }  dV_S,
\end{split}
\end{align}
where $\phi_1 = O\left(\rho^{-3} \right)$ and (\ref{mrhoh}) have been used. 

Ultimately, with the use of (\ref{p0e2}), (\ref{s2}), (\ref{pi1}) and the estimate (\ref{phi1volest}), the Penrose inequality in the second order of expansion holds if the following inequality is satisfied
\begin{align} \label{PIe2}
\begin{split}
&\frac{1}{4 } \int \displaylimits_M  |k'|^2_{g_S} \sqrt{1-\frac{2m}{\rho} + \frac{|\Lambda|}{3} \rho^2 }  d V_S + \frac{\rho_h}{4 \pi} \left< \phi_1 \right>_h^2 \\
\geq & \frac{|\Lambda|^2}{18 \pi m } \sum_{i=1}^3 \left( \lim_{\rho \to \infty} \rho^3 \left< x^i \rho \partial_{\rho} \phi_1 - x^i \phi_1  \right> \right)^2  
 + \frac{\rho_h^3}{4 \pi} |\Lambda| \left< \phi_1 \right>_h^2 + 4 \rho_h   \left< \phi_1^2 \right>_h  \left( 1 + |\Lambda| \rho_h^2  \right).
\end{split}
\end{align}
Verification of a similar inequality in the asymptotically flat case 
was accomplished by finding an explicit expression for $\phi_1$ in terms of its Neumann boundary data, see \cite{kt1, kt2}. However, in the current asymptotically hyperbolic regime such an explicit expression is no longer feasible. Thus, in the next subsection we will proceed by making appropriate estimates instead of explicit computations.

\subsection{Verifying the Penrose inequality up to second order of expansion}

In this subsection we will show that if the normal component $k_{nn}= \epsilon k'_{nn}$ of the seed extrinsic curvature has small $H^1$-norm in a neighborhood of the boundary $\partial M$ compared to a global weighted $L^2$-norm of $|k|_{g_S}$, then inequality (\ref{PIe2}) will be satisfied. The first step to achieve this goal is to show that $\phi_1$ is globally controlled by $k'_{nn}$ near the boundary.

\begin{lemma} \label{lemmap1knn}
Let $\phi_1 \in C^{2, \beta}_{3}(M)$ be the solution of \eqref{phi1eq}.
Then
\begin{equation} \label{phi1knn}
||\phi_1||_{H^1 \left( M \right)} \leq C ||k'_{nn}||_{H^1 \left( M' \right)},
\end{equation}
where $M'$ is any precompact domain within $M$ containing the boundary $\partial M$ and $C = C(M')$.
\end{lemma}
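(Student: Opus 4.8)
The plan is to obtain \eqref{phi1knn} from the standard $H^1$ energy identity for the homogeneous linear equation \eqref{phi1eq}, together with two applications of the Sobolev trace inequality. First I would multiply the bulk equation $\Delta_{g_S}\phi_1-|\Lambda|\phi_1=0$ by $\phi_1$ and integrate by parts over $M$. Since $\phi_1=O_1(\rho^{-3})$, the flux through the coordinate spheres $S_\rho$ tends to zero as $\rho\to\infty$, so the only surviving boundary term comes from the inner boundary $\partial M$, where the Robin (Neumann) condition $\partial_n\phi_1=\tfrac14 k'_{nn}$ holds. Recalling that the outward normal of $M$ along $\partial M$ is $-n$, this yields the identity
\begin{equation}
\int_M\left(|\nabla\phi_1|^2_{g_S}+|\Lambda|\phi_1^2\right)dV_S=-\frac14\int_{\partial M}\phi_1\,k'_{nn}\,d\sigma,
\end{equation}
whose left-hand side bounds $\min\{1,|\Lambda|\}\,||\phi_1||^2_{H^1(M)}$ from below.

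The second step is to estimate the boundary integral. By the Cauchy-Schwarz inequality $\left|\int_{\partial M}\phi_1\,k'_{nn}\,d\sigma\right|\le||\phi_1||_{L^2(\partial M)}\,||k'_{nn}||_{L^2(\partial M)}$, and the Sobolev trace inequality \eqref{jfjhwiw} controls $||\phi_1||_{L^2(\partial M)}$ by a fixed multiple of $||\phi_1||_{H^1(M)}$. Feeding these into the energy identity gives
\begin{equation}
\min\{1,|\Lambda|\}\,||\phi_1||^2_{H^1(M)}\le C\,||\phi_1||_{H^1(M)}\,||k'_{nn}||_{L^2(\partial M)},
\end{equation}
and dividing by $||\phi_1||_{H^1(M)}$ (the claim being trivial when this vanishes) leaves $||\phi_1||_{H^1(M)}\le C\,||k'_{nn}||_{L^2(\partial M)}$.

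Finally I would convert the boundary $L^2$-norm of $k'_{nn}$ into the asserted interior $H^1$-norm over $M'$. Extending the unit normal to the gradient of the distance function $t$ within the normal injectivity radius realizes $k'_{nn}=k'(\nabla t,\nabla t)$ as a genuine function on a collar of $\partial M$, hence on $M'$, where it lies in $H^1$ because $k'\in C^{1,\beta}_\tau(M)$ and $M'$ is precompact. Since $M'$ is a precompact domain meeting the edge of $M$ precisely along $\partial M$, one has $\partial M\subseteq\partial M'$, and the ordinary trace theorem on the compact manifold $\overline{M'}$ gives $||k'_{nn}||_{L^2(\partial M)}\le||k'_{nn}||_{L^2(\partial M')}\le C(M')\,||k'_{nn}||_{H^1(M')}$. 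Chaining this with the previous bound produces \eqref{phi1knn} with $C=C(M')$.

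I expect no essential obstacle here, only two points requiring care: the vanishing of the flux at infinity, which follows from the weighted decay $\phi_1=O_1(\rho^{-3})$ combined with the asymptotics of the volume element $dV_S$; and the bookkeeping that presents $k'_{nn}$ as an admissible element of $H^1(M')$, so that the inward trace inequality applies with a constant depending only on the chosen domain $M'$.
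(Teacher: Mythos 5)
Your proof is correct and follows essentially the same route as the paper: both start from the energy identity $\int_M(|\nabla\phi_1|^2_{g_S}+|\Lambda|\phi_1^2)\,dV_S=-\tfrac14\int_{\partial M}\phi_1 k'_{nn}\,d\sigma_S$ and then control the boundary term by $H^1$ norms. The only difference is cosmetic: the paper converts the boundary integral to a bulk integral over $M'$ with a cutoff and absorbs the $\phi_1$ contribution via Young's inequality with a small parameter, while you use Cauchy--Schwarz together with the trace inequality (itself proved in the paper by the same cutoff computation), plus a trace bound to pass from $\|k'_{nn}\|_{L^2(\partial M)}$ to $\|k'_{nn}\|_{H^1(M')}$.
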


\begin{proof}
The boundary value problem (\ref{phi1eq}) can be used to derive the following equality,
\begin{equation} \label{phi1less0}
\int \displaylimits_{M} \left( |\nabla \phi_1|^2_{g_S} + |\Lambda| \phi_1^2  \right) dV_S = - \frac{1}{4} \int \displaylimits_{\partial M}   \phi_1 k'_{nn} d \sigma_S.
\end{equation}
Let $\eta \in C^{\infty}_c (M)$ be a nonnegative cut-off function such that $\eta \equiv 1 $ on some precompact domain $M' \subset M$ containing the boundary $\partial M$. Let $t:M\rightarrow\mathbb{R}_{+}$ denote the distance function to $\partial M$, and note that since the manifold is a Schwarzschild-AdS time slice the injectivity radius of the boundary is infinite; hence the function $t$ is globally smooth. 
We have
\begin{equation}\label{knnineq}
 -  \int \displaylimits_{\partial M}   \phi_1 k'_{nn} d \sigma_S  = \int \displaylimits_{M'} \partial_t \left( \eta \phi_1 k'_{nn}\right) d V_S 
\leq  \gamma||\phi_1||^2_{H^1 \left( M' \right)} + C_1 \gamma^{-1}  ||k'_{nn}||^2_{H^1 \left( M' \right)},
\end{equation}
where Young's inequality has been used and $\gamma>0$ is a parameter. By choosing $\gamma$ sufficiently small, the combination of (\ref{phi1less0}) and (\ref{knnineq}) yields
\begin{equation} \label{phi1knnnorm}
||\phi_1||_{H^1 \left( M \right)} \leq C_2 ||k'_{nn}||_{H^1 \left( M' \right)},
\end{equation}
where $C_2$ depends on the choice of $M'$.
\end{proof}

This lemma shows that the last two terms on the right-hand side of (\ref{PIe2}) can be estimated in terms of the squared $H^1$-norm of $k_{nn}$ in the neighborhood $M'$ of the boundary $\partial M$. We will now estimate the first term on the right-hand side of (\ref{PIe2}) in terms of the same quantity. 

\begin{lemma} \label{lemap1approx}
Let $\phi_1 \in C^{2, \beta}_{3}(M)$ be the solution of \eqref{phi1eq}. Then
\begin{equation} \label{phi1linmom}
\lim_{\rho \to \infty} \rho^3 \big| \left< x^i \rho \partial_{\rho} \phi_1 - x^i \phi_1  \right> \big| \leq C_3 ||k'_{nn}||_{H^1 \left( M' \right)},
\end{equation}
where $C_3$ depends on the choice of $M'$.
\end{lemma}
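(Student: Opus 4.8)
The plan is to recognize the quantity $\lim_{\rho\to\infty}\rho^3\langle x^i(\rho\partial_\rho\phi_1-\phi_1)\rangle$ as a fixed multiple of the first-order linear momentum $p_i^{(1)}$ from \eqref{pi1}, i.e.\ as a flux integral at infinity, and to convert this flux into data concentrated near $\partial M$ by a Green-type identity. The natural comparison function is the AdS boost lapse $V_i=\sqrt{|\Lambda|/3}\,x^i\rho$, which solves $\Delta_b V_i=|\Lambda|V_i$ exactly (take the trace of $\Hess_b V_i=\frac{|\Lambda|}{3}b V_i$) and grows linearly in $\rho$. Since $\phi_1$ solves $(\Delta_{g_S}-|\Lambda|)\phi_1=0$ and decays like $\rho^{-3}$, the pairing of the linearly growing $V_i$ against the decaying $\phi_1$ on large coordinate spheres is precisely tuned to extract the leading coefficient of $\phi_1$, which is what the limit measures. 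Writing $N=1-\frac{2m}{\rho}+\frac{|\Lambda|}{3}\rho^2$ and $L=\Delta_{g_S}-|\Lambda|$, a short computation shows that the $S_R$-flux equals $\frac{|\Lambda|}{3}\lim_{\rho\to\infty}\rho^3\langle x^i(\rho\partial_\rho\phi_1-\phi_1)\rangle$ in the limit $R\to\infty$.

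I would then apply Green's second identity to $\phi_1$ and $V_i$ over the truncated region $M_R=\{\rho_h\le\rho\le R\}$. Because $L\phi_1=0$ and $L$ is self-adjoint, this yields
\[
\int_{S_R}\!\big(V_i\partial_n\phi_1-\phi_1\partial_n V_i\big)\,d\sigma
=\int_{\partial M}\!\big(V_i\partial_n\phi_1-\phi_1\partial_n V_i\big)\,d\sigma-\int_{M_R}\!\phi_1\,LV_i\,dV_S,
\]
where $n=\sqrt{N}\,\partial_\rho$ is the unit normal pointing toward the asymptotic end. The crucial point is that, since $V_i$ is an exact $b$-eigenfunction, the error $LV_i=(\Delta_{g_S}-\Delta_b)V_i$ is driven only by the mass term in $g_S-b$; carrying out the computation gives $LV_i=O(\rho^{-2})$, so that $LV_i\in L^2(M,dV_S)$ and the volume integral converges as $R\to\infty$. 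Evaluating the left-hand flux in this limit identifies it with $\frac{|\Lambda|}{3}$ times the sought quantity, so it remains only to bound the two terms on the right.

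For the boundary term, the Neumann condition in \eqref{phi1eq} gives $\partial_n\phi_1=\tfrac14 k'_{nn}$, while $\partial_n V_i=\sqrt{N}\,\sqrt{|\Lambda|/3}\,x^i$ \emph{vanishes} on $\partial M$ because $N(\rho_h)=0$; hence the term collapses to $\tfrac14\int_{\partial M}V_i\,k'_{nn}\,d\sigma$, which the trace inequality bounds by $C\|k'_{nn}\|_{L^2(\partial M)}\le C\|k'_{nn}\|_{H^1(M')}$. For the volume term, Cauchy--Schwarz gives $\big|\int_M\phi_1\,LV_i\,dV_S\big|\le\|LV_i\|_{L^2(M)}\|\phi_1\|_{L^2(M)}$, and since $LV_i\in L^2(M)$ and $\|\phi_1\|_{L^2(M)}\le\|\phi_1\|_{H^1(M)}$, an application of Lemma \ref{lemmap1knn} bounds this by $C\|k'_{nn}\|_{H^1(M')}$. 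Combining the two estimates with the explicit constant relating the flux to the limit yields \eqref{phi1linmom}.

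The main obstacle I anticipate is the careful bookkeeping at infinity. One must verify both that the error $LV_i$ decays like $\rho^{-2}$, fast enough to be square-integrable against $dV_S\sim\rho\,d\rho\,d\omega$ and to make the volume integral converge, and that the linearly growing $V_i$ paired with the $\rho^{-3}$-decay of $\phi_1$ produces a finite, nonzero flux equal (up to the constant $|\Lambda|/3$) to the stated limit; the decay hypothesis $\phi_1=O_1(\rho^{-3})$ is exactly what guarantees the limit exists and that the $S_R$-integral stabilizes. The remaining steps---the two trace estimates and the invocation of Lemma \ref{lemmap1knn}---are routine once this conserved-flux identity is in place.
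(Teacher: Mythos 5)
Your proof is correct, but it takes a genuinely different route from the paper's. The paper works entirely at infinity: it expands $\phi_1$ in spherical harmonics on $M\setminus B_{\bar\rho}(0)$, notes that the limit in \eqref{phi1linmom} sees only the $\ell=1$ modes, bounds the coefficients $c_{1j}$ on the sphere $\partial B_{\bar\rho}(0)$ by $\|\phi_1\|_{H^1(M)}$ via the Sobolev trace theorem, and then invokes Lemma \ref{lemmap1knn}. You instead pair $\phi_1$ with the boost lapse $V_i$ through Green's identity and convert the flux at infinity into data at the horizon. Your key computations check out: with $N=1-\tfrac{2m}{\rho}+\tfrac{|\Lambda|}{3}\rho^2$ one finds $(\Delta_{g_S}-|\Lambda|)V_i=-3m\sqrt{|\Lambda|/3}\,x^i\rho^{-2}$, which lies in $L^2(M,dV_S)$ since $dV_S\sim\rho\,d\rho\,d\omega$; the flux through $S_R$ equals $\sqrt{|\Lambda|/3}\,\sqrt{N}\,R^2\langle x^i(R\partial_\rho\phi_1-\phi_1)\rangle\to\tfrac{|\Lambda|}{3}\lim_{\rho\to\infty}\rho^3\langle x^i(\rho\partial_\rho\phi_1-\phi_1)\rangle$ because $\phi_1=O_1(\rho^{-3})$; and $\partial_nV_i=\sqrt{N}\sqrt{|\Lambda|/3}\,x^i$ indeed vanishes at $\rho=\rho_h$, so the horizon term reduces to $\tfrac14\int_{\partial M}V_i k'_{nn}\,d\sigma$. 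Both remaining terms are then bounded by $\|k'_{nn}\|_{H^1(M')}$ exactly as you say. What your approach buys is an exact charge-type identity for $p_i^{(1)}$ in terms of $k'_{nn}$ on the horizon plus an $L^2$-controlled bulk correction, a proof that the limit actually exists, and the avoidance of the radial ODE and Bessel-function asymptotics; the paper's mode-decomposition argument is shorter to write but leans on the asserted $O(\rho^{-3})$ behavior of each radial solution $\mathcal{R}_\ell$. The only points deserving explicit care in your write-up are the regularity of $V_i=\sqrt{|\Lambda|/3}\,x^i\rho$ up to the horizon (where $\rho$ is a smooth function of the squared distance, so Green's identity applies) and the convergence of $\int_{M_R}\phi_1\,LV_i\,dV_S$ as $R\to\infty$, which follows from Cauchy--Schwarz.
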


\begin{proof} 
In order to estimate the limit in (\ref{phi1linmom}) we will make use of the asymptotic expansion of the equation from (\ref{phi1eq}), that is
\begin{equation} \label{asymptphi1}
\begin{split}
&  \frac{|\Lambda|}{3 \rho} \partial_{\rho} \left( \rho^3 \partial_{\rho} \phi_1 \right) + \frac{1}{\rho^2} \Delta_{S^2} \phi_1 - |\Lambda| \phi_1 = O(\rho^{-3}) \quad \mathrm{on} \quad M \setminus B_{\overline{\rho}}(0) ,
\end{split}
\end{equation}
where the radius $\overline{\rho}$ is chosen sufficiently large. In this region the solution may be represented using (real form) spherical harmonics
\begin{equation}
\phi_1 (\rho,\theta,\varphi)= \sum_{\ell=0}^{\infty} \sum_{j=- \ell}^{\ell} c_{\ell j} \mathcal{R}_{\ell} (\rho) Y_{\ell j} (\theta, \varphi),
\end{equation}
where the leading term in each $\mathcal{R}_{\ell}$ is given in terms of a modified Bessel function of the first kind
\begin{equation}
\mathcal{R}_{\ell}(\rho) \sim \frac{a_{\ell}}{\rho} I_2 \left( \frac{b_{\ell}}{\rho} \right) = O (\rho^{-3}),
\end{equation}
for some constants $a_\ell$, $b_\ell$, and $c_{\ell j}$. If we assume that $\mathcal{R}_{\ell}(\bar{\rho})=1$, then expanding $\phi_1$ on the inner boundary yields
\begin{equation}
    \phi_1 \big|_{\partial B_{\bar{\rho}}(0)} =  \sum_{\ell=0}^{\infty} \sum_{j=-\ell}^{\ell} c_{\ell j} Y_{\ell j}.
\end{equation}
It follows that
\begin{equation} \label{phi1limb}
\lim_{\rho \to \infty} \rho^3 \big|\left< x^i \rho \partial_{\rho} \phi_1 - x^i \phi_1  \right> \big| \leq C \sum_{j=-1}^1 |c_{1j}|,
\end{equation}
since the $\ell=1$ spherical harmonics are normalized restrictions of the Cartesian coordinates to the unit sphere.
The desired estimate \eqref{phi1linmom} may now be obtained by observing that the right-hand side of (\ref{phi1limb}) can be estimated in terms of the $H^1(M)$ norm of $\phi_1$ by the Sobolev trace theorem, and this norm can in turn be estimated by $||k'_{nn}||_{H^1 \left( M' \right)}$ according to 
Lemma \ref{lemmap1knn}.
\end{proof}


By Lemmas \ref{lemmap1knn} and \ref{lemap1approx} the Penrose inequality in the second order of expansion is satisfied if
\begin{equation} \label{PIbound}
\int \displaylimits_M  |k'|^2_{g_S} \sqrt{1-\frac{2m}{\rho} + \frac{|\Lambda|}{3} \rho^2 }  d V_S \geq C  ||k'_{nn}||^2_{H^1 \left( M' \right)},
\end{equation}
where $C$ is a constant which depends on the local geometry of Schwarzschild-AdS near the boundary. Clearly this inequality is achieved for a large class of seed data, as $k'_{nn}$ contributes to but does not determine $|k'|_{g_S}$. Furthermore, we see that if a strict inequality is obtained in (\ref{PIbound}), then the error in the Penrose inequality becomes
\begin{equation}
\mathfrak{m} -\sqrt{\frac{A_h}{16 \pi}} \left(1+ \frac{ |\Lambda| }{12 \pi} A_h  \right) = c \epsilon^2 + O (\epsilon^3),
\end{equation}
for some constant $c>0$. It follows that the desired inequality holds for a large class of perturbed Schwarzschild-AdS initial data. Note that the outermost MOTS condition is not used here. However, we may view the special nature of the perturbations, and in particular the condition \eqref{PIbound}, as a replacement of this typical assumption.

\begin{theorem} \label{mth2}
Let $(M, \hat{g}=\phi^4 g_S, \hat{k}=\phi^{-2}k)$ be a 3-dimensional, vacuum, maximal, asymptotically hyperbolic, conformally perturbed Schwarzschild-AdS initial data set as in \eqref{fact_exp} of class $\left(\beta, \tau \right)$ with $\tau>3/2$, and having a MOTS inner boundary $\partial M$. If there exists a constant $C$ and domain $M'\supset \partial M$ such that
\begin{equation}
\int \displaylimits_M  |k|^2_{g_S} \sqrt{1-\frac{2m}{\rho} + \frac{|\Lambda|}{3} \rho^2 }  d V_S > C  ||k_{nn}||^2_{H^1 \left( M' \right)},
\end{equation}
then the Penrose inequality
\begin{equation}
\mathfrak{m} \geq  \sqrt{\frac{A_h}{16 \pi}} \left(1+ \frac{ |\Lambda| }{12 \pi} A_h  \right)
\end{equation}
holds for all $\epsilon$ sufficiently small, where $A_h$ is the area of the MOTS boundary.
\end{theorem}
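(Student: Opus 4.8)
The plan is to assemble the order-by-order analysis of the preceding subsection into a single smallness argument in the perturbation parameter $\epsilon$. Because the perturbed data depend on $\epsilon$ through the expansion \eqref{fact_exp}, each of $\mathfrak{m}$, $\mathcal{A}$, and the $p_\alpha$ admits an expansion in powers of $\epsilon$, and the whole question reduces to comparing coefficients. The first point, already recorded in \eqref{s1} and \eqref{firstordm}, is that the zeroth-order terms both equal $m$ and the first-order terms coincide, $\mathfrak{m}^{(1)}=p_0^{(1)}=\mathcal{A}^{(1)}$, so that the Penrose inequality is saturated through first order and the sign of the error $\mathfrak{m}-\mathcal{A}$ is governed entirely by its second-order coefficient.

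First I would establish strict positivity of $c:=\mathfrak{m}^{(2)}-\mathcal{A}^{(2)}$. This is exactly the content of the chain of implications set up above: the second-order inequality $p_0^{(2)}-\frac{1}{2m}\sum_i (p_i^{(1)})^2\geq\mathcal{A}^{(2)}$, i.e.\ \eqref{PI2}, follows from \eqref{PIe2} after inserting \eqref{p0e21}, \eqref{s2}, \eqref{pi1} together with the estimate \eqref{phi1volest}; and Lemmas \ref{lemmap1knn} and \ref{lemap1approx} bound every term on the right-hand side of \eqref{PIe2} by a multiple of $\|k'_{nn}\|^2_{H^1(M')}$, reducing matters to the single sufficient condition \eqref{PIbound}. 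Since $k=\epsilon k'$ and $k_{nn}=\epsilon k'_{nn}$, the hypothesis of the theorem is, after cancelling the common factor $\epsilon^2$, precisely the strict form of \eqref{PIbound} for the pair $(k',k'_{nn})$, with $C$ the geometric constant furnished by those two lemmas. Carrying the strictness through the reductions then yields $c>0$.

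It remains to pass from the sign of the leading coefficient to the inequality itself. Writing $\mathfrak{m}-\mathcal{A}=c\,\epsilon^2+O(\epsilon^3)$ with $c>0$, I would fix $\epsilon_0>0$ so small that the remainder is dominated by $\tfrac12 c\,\epsilon^2$ for all $0<\epsilon<\epsilon_0$, whence $\mathfrak{m}-\mathcal{A}\geq\tfrac12 c\,\epsilon^2>0$. The quantity $A_h$ entering $\mathcal{A}$ throughout is the area of the inner MOTS boundary, which is exactly what appears in \eqref{s1}--\eqref{s2}, so the resulting inequality is the one asserted.

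The main obstacle is the justification of the expansions and, in particular, the uniform control of the $O(\epsilon^3)$ remainder: one must verify that the conformal factor $\phi=1+\sum_j\phi_j\epsilon^j$ produced by Theorem \ref{mth} genuinely admits such an expansion with $\phi_j\in C^{2,\beta}_\tau(M)$ and with a tail estimated uniformly in $\epsilon$, so that the induced expansions of $\mathfrak{m}$, $\mathcal{A}$ and $p_i$ carry legitimately higher-order remainders with constants independent of $\epsilon$; this is where the construction of Section \ref{6.3} and the fall-off $\phi_1=O_1(\rho^{-3})$ are essential. A secondary issue, should one wish to identify $A_h$ with the \emph{minimal} area enclosing the MOTS rather than the boundary area, is to invoke the positivity of the boundary mean curvature $\hat h$—guaranteed by Theorem \ref{mth}—and the associated outer-minimizing property (Lemma \ref{aeoiroiqh}) to conclude that the two areas agree in the perturbative regime.
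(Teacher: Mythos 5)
Your proposal reproduces the paper's argument: first-order saturation of the inequality via \eqref{s1} and \eqref{firstordm}, reduction of the second-order condition \eqref{PI2} to \eqref{PIe2} using \eqref{p0e21}, \eqref{s2}, \eqref{pi1} and the estimate \eqref{phi1volest}, then control of the right-hand side by $\|k'_{nn}\|^2_{H^1(M')}$ through Lemmas \ref{lemmap1knn} and \ref{lemap1approx}, so that the hypothesis (the strict form of \eqref{PIbound} after cancelling $\epsilon^2$) yields $\mathfrak{m}-\mathcal{A}=c\epsilon^2+O(\epsilon^3)$ with $c>0$. This is exactly the route taken in Section \ref{secpenr2}, including the closing remark that the identification of $A_h$ with the minimal enclosing area rests on Lemma \ref{aeoiroiqh}.
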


\subsection{An outer-minimizing inner boundary}

In the context of the Penrose inequality, it may be desirable for the boundary MOTS to be either outermost or outerminimizing. Although the type of perturbations studied here do not require this hypothesis in order for the Penrose inequality to hold, it is useful to note that if $k_{nn}$ is positive on $\partial M$, then the class of initial data in Theorem \ref{mth2} has an outerminimizing boundary. This may be established by showing that the boundary satisfies the `shrink-wrap' obstacle problem. More precisely, in this case $M$ will admit a foliation by surfaces of positive mean curvature.

\begin{lemma}\label{aeoiroiqh}
Let $(M, \hat{g}=\phi^4 g_S, \hat{k}=\phi^{-2} k)$ be a 3-dimensional, vacuum, maximal, asymptotically hyperbolic, conformally perturbed Schwarzschild-AdS initial data set as in \eqref{fact_exp} of class $\left(\beta, \tau \right)$, $\tau>3/2$, with MOTS inner boundary $\partial M$. If $k_{nn} >0$ on $\partial M$, then each surface in the foliation by $\rho=\mathrm{const}$ spheres has positive mean curvature for sufficiently small $\epsilon$. In particular, the boundary is outerminimizing.
\end{lemma}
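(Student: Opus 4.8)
The plan is to reduce the claim to the positivity of a single scalar quantity, handle it by a two-region analysis, and then deduce the outerminimizing property from a calibration argument. First I would record how the mean curvature of a $\rho=\mathrm{const}$ sphere $S_\rho$ transforms under the conformal change $\hat g=\phi^4 g_S$. Exactly as in the computation of $\theta_+$ in Section \ref{prel}, if $n$ is the $g_S$-unit normal pointing towards the asymptotic end and $h_S$ is the $g_S$-mean curvature of $S_\rho$, then the $\hat g$-mean curvature is
\[
\hat h = \phi^{-3}\left(4\partial_n\phi + h_S\,\phi\right).
\]
Since $\phi>0$, the sign of $\hat h$ is that of $F:=4\partial_n\phi+h_S\phi$. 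A direct computation with the warped-product form of $g_S$ gives $h_S(\rho)=\tfrac{2}{\rho}\sqrt{1-\tfrac{2m}{\rho}+\tfrac{|\Lambda|}{3}\rho^2}$, which vanishes at $\rho=\rho_h$ (reflecting the minimality of the Schwarzschild--AdS boundary), is strictly positive for $\rho>\rho_h$, and tends to $2\sqrt{|\Lambda|/3}$ at infinity; in particular $h_S\ge 0$ on $M$ and $\inf_{\rho\ge\rho_h+\delta}h_S>0$ for each $\delta>0$.

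The core of the argument is to show $F>0$ on all of $M$ for small $\epsilon$, splitting $M$ into a near-boundary collar and a far region. In the far region $\{\rho\ge\rho_h+\delta\}$ the quantity $h_S\phi$ is bounded below by a positive constant (using $\phi\to1$), while $\partial_n\phi$ is uniformly small: from $\phi-1\in C^{2,\beta}_\tau$ with $\|\phi-1\|_{C^{2,\beta}_\tau}=O(\epsilon)$ and the uniform equivalence of $g_S$ and $b$, one obtains $\sup_M|\partial_n\phi|=O(\epsilon)$, so $F\ge h_S\phi-4|\partial_n\phi|>0$ once $\epsilon$ is small. The delicate region is the collar near $\partial M$, and this is the main obstacle: there $h_S\to 0$, so the positivity of $F$ is of the same order $\epsilon$ as the perturbation itself and cannot be read off from $h_S$. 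Here I would use the boundary condition and the linearized equation \eqref{phi1eq}: on $\partial M$ one has $\partial_n\phi_1=\tfrac14 k'_{nn}\ge \tfrac14\min_{\partial M}k'_{nn}>0$. Since $\phi_1$ is a fixed function in $C^{2,\beta}$, its normal derivative is continuous up to the boundary, so there is a fixed $\delta>0$ (independent of $\epsilon$) on which $\partial_n\phi_1\ge \tfrac18\min_{\partial M}k'_{nn}>0$. Writing $\partial_n\phi=\epsilon\,\partial_n\phi_1+O(\epsilon^2)$, with the $O(\epsilon^2)$ remainder uniform on the compact collar, yields $\partial_n\phi\ge c\,\epsilon>0$ there, and together with $h_S\phi\ge 0$ this gives $F\ge 4c\,\epsilon>0$. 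Combining the two regions shows $\hat h>0$ on every leaf $S_\rho$, i.e. $M$ is foliated by spheres of positive mean curvature.

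Finally, I would deduce that $\partial M$ is outerminimizing from this mean-convex foliation by a divergence-theorem (calibration) argument. Let $\hat n$ be the globally defined $\hat g$-unit normal field to the foliation, so that $\mathrm{div}_{\hat g}\hat n=\hat h>0$. Given any surface $\tilde\Sigma\subset M$ enclosing $\partial M$, let $W$ be the region bounded by $\partial M$ and $\tilde\Sigma$. Integrating $\mathrm{div}_{\hat g}\hat n$ over $W$, and using $\langle\hat n,\nu\rangle=-1$ on $\partial M$ and $\langle\hat n,\nu\rangle\le 1$ on $\tilde\Sigma$, where $\nu$ is the outward normal of $W$, gives
\[
0<\int_W\hat h\,dV_{\hat g}=\int_{\tilde\Sigma}\langle\hat n,\nu\rangle\,dA-|\partial M|\le|\tilde\Sigma|-|\partial M|,
\]
so $|\tilde\Sigma|\ge|\partial M|$; hence $\partial M$ is (strictly) outerminimizing. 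I expect the only real work to lie in the collar estimate above, where the competition between the vanishing seed mean curvature $h_S$ and the $O(\epsilon)$ size of $\partial_n\phi$ must be resolved using the sign of $k'_{nn}$ and the $\epsilon$-uniform continuity of $\partial_n\phi_1$; the far-field and divergence-theorem steps are routine by comparison.
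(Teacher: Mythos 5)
Your proof is correct, but it handles the crucial near-boundary region by a genuinely different mechanism than the paper. Both arguments reduce to showing $\hat h>0$ on every leaf and both are routine away from the horizon, where $h_S$ is bounded below by a positive constant and $\partial_n\phi=O(\epsilon)$. In the collar, the paper does not work with $F=4\partial_n\phi+h_S\phi$ directly: it first notes that the MOTS condition together with maximality gives $\hat h(\rho_h)=\hat k_{\hat n\hat n}(\rho_h)=\phi^{-6}k'_{nn}\epsilon>0$, then invokes the second variation of area (Riccati) formula and two traces of the Gauss equation together with the Hamiltonian constraint \eqref{hamcon} to compute $\partial_{\hat n}\hat h(\rho_h)=|\Lambda|+\rho_h^{-2}+O(\epsilon)>0$, and finally Taylor-expands $\hat h$ in the $\hat g$-distance to the boundary (which behaves like $\sqrt{\rho-\rho_h}$) to propagate positivity across a collar of fixed width. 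You instead observe that the two summands of $F$ are separately nonnegative on a fixed collar: $h_S\phi\geq 0$ because the Schwarzschild--AdS coordinate spheres are mean-convex, and $4\partial_n\phi\geq c\epsilon>0$ because the linearized Robin condition in \eqref{phi1eq} gives $\partial_n\phi_1=\tfrac14 k'_{nn}>0$ on $\partial M$ and $\phi_1$ is a fixed ($\epsilon$-independent) $C^{2,\beta}$ function, so its normal derivative stays positive on an $\epsilon$-independent neighborhood. Both routes use $k_{nn}>0$ in an essential way; yours is more elementary (no curvature computation), while the paper's yields a quantitative order-one rate of increase of $\hat h$ off the horizon. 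You also supply the divergence-theorem calibration argument showing that a mean-convex foliation forces $\partial M$ to be outerminimizing, a step the paper only asserts. The one point to flag in your write-up is the uniformity of the $O(\epsilon^2)$ remainder in $\partial_n\phi=\epsilon\,\partial_n\phi_1+O(\epsilon^2)$, which requires the series \eqref{fact_exp} to converge with $C^1$ bounds summable in $\epsilon$; this is implicit in the paper's own use of $O(\epsilon)$ remainders, so it is not a gap relative to the paper's standard of rigor.
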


\begin{proof}
Observe that the mean curvature of the coordinate spheres $S_{\rho}\hookrightarrow(M,\hat{g})$ is given by
\begin{equation}
\hat{h}(\rho) =  4\phi^{-3} \partial_n \phi(\rho) + \phi^{-2}h(\rho),\quad\quad\quad
h(\rho)=\frac{2}{\rho} \sqrt{1- \frac{2m}{\rho} + \frac{|\Lambda|}{3} \rho^2},
\end{equation}
where $h(\rho)$ is the mean curvature of the same coordinate sphere sitting inside a constant time slice of the Schwarzschild-AdS spacetime. According to the expansion of (\ref{fact_exp}) we have
\begin{equation}
\hat{h}(\rho) = \frac{2}{\rho} \sqrt{1- \frac{2m}{\rho} + \frac{|\Lambda|}{3} \rho^2}  +  O (\epsilon).
\end{equation}
Moreover by (\ref{thetaplus}), since the inner boundary is a MOTS and the data is maximal it follows that $\hat{h} \left( \rho_h \right) = \hat{k}_{\hat{n} \hat{n}}(\rho_h)$. Using that $\hat{k}_{\hat{n} \hat{n}} = \phi^{-6} k_{nn}$, as well as the assumption $k_{nn} >0$ on $\partial M$, we see that this surface has positive mean curvature. Furthermore
\begin{equation}
\hat{h}(\rho) = 2 \sqrt{\frac{|\Lambda|}{3}} + O(\rho^{-2}) + O(\epsilon) \quad \mathrm{as} \quad \rho \rightarrow \infty,
\end{equation}
and
\begin{equation}
\hat{h}(\rho) =c\sqrt{\rho - \rho_h} + O \left( \left(\rho - \rho_h\right)^{3/2} \right) + O(\epsilon) \quad \mathrm{as} \quad \rho \rightarrow \rho_h,
\end{equation}
where $c>0$ is a constant. Hence,
\begin{equation}\label{hsmb1}
\hat{h}(\rho) >0 \quad \text{ if } \quad  \rho > \rho_h + c_1 \epsilon^2
\end{equation}
for $\epsilon$ sufficiently small, where $c_1$ is another constant. 

Next, observe that an expansion for $\hat{h}$ at the boundary can be computed with help from the second variation of area formula
\begin{equation}
\partial_{\hat{n}} \hat{h}(\rho_h) = - \hat{\mathrm{Ric}}(\hat{n}, \hat{n} ) - |\hat{A}|^2_{\hat{g}},
\end{equation}
where $\hat{\mathrm{Ric}}(\hat{n}, \hat{n} )$ is the Ricci curvature in the $\hat{n}$ direction and $\hat{A}$ is the boundary extrinsic curvature. Recall that two traces of the Gauss equations imply
\begin{equation}
\hat{\mathrm{Ric}}(\hat{n}, \hat{n} ) = \frac{1}{2} \hat{R} - \hat{K} + \frac{1}{2} \hat{h}^2 - \frac{1}{2} |\hat{A}|^2_{\hat{g}}     = - |\Lambda| + \frac{1}{2} |\hat{k}|^2_{\hat{g}} - \hat{K} + \frac{1}{2} \hat{h}^2 - \frac{1}{2} |\hat{A}|^2_{\hat{g}} ,
\end{equation}
where $\hat{K}$ is the Gaussian curvature of $S_{\rho_h}$ and the first vacuum constraint equation from (\ref{hamcon}) has been used in the second equality. Since on the boundary $\hat{h} = \hat{k}_{\hat{n} \hat{n}} =  \phi^{-6} k'_{nn} \epsilon$, and
\begin{equation}
\hat{A} = \frac{1}{2} h g_{S^2} + O(\epsilon)=O(\epsilon), \quad\quad\quad
\hat{K}=\rho_h^{-2}+O(\epsilon),
\end{equation}
we find that
\begin{equation}
\partial_{\hat{n}} \hat{h}(\rho_h) = |\Lambda|+\rho_h^{-2}+O(\epsilon).
\end{equation}
Let $\hat{t}(\rho)$ denote the function on $S_{\rho}$ which indicates the $\hat{g}$-distance to $\partial M$, then
\begin{equation}
c_2^{-1} \sqrt{\rho-\rho_h}\leq\hat{t}(\rho) = \bigintsss_{\rho_h}^{\rho} \frac{\phi^{2} d s}{\sqrt{1- \frac{2m}{s} + \frac{|\Lambda|}{3} s^2 }} \leq  c_2 \sqrt{\rho-\rho_h}
\end{equation}
for some constant $c_2>0$ when $\rho\in[\rho_h,\rho_h+1]$.
It follows that
\begin{align}
\begin{split}
\hat{h}(\rho)  = & \hat{h}(\rho_h) + \partial_{\hat{n}} \hat{h}(\rho_h) \hat{t}(\rho) + O\left( \hat{t}(\rho)^2 \right) \\
\geq & \phi^{-6}k_{nn}'(\rho_h) \epsilon+
c_2^{-1}(|\Lambda| + \rho_h^{-2} )\sqrt{\rho-\rho_h}\left(1-c_3\epsilon-c_4\sqrt{\rho-\rho_h}\right),
\end{split}
\end{align}
where the positive constants $c_3$ and $c_4$ are independent of $\epsilon$ and $\rho$. We then find that
\begin{equation}\label{hsmb2}
\hat{h}(\rho)\geq \phi^{-6}k_{nn}'(\rho_h) \epsilon>0 \quad\text{ if }\quad \rho<\rho_h +\min\{1,c_4^2 /2\},
\end{equation}
when $\epsilon$ is sufficiently small. Since $c_4$ does not depend on $\epsilon$, the desired result follows from the combination of (\ref{hsmb1}) and (\ref{hsmb2}).
\end{proof}

\subsection*{Acknowledgements}
JK would like to thank Jacek Tafel for introducing him to the topic of Penrose inequalities, and for encouragement along the way. This work was initiated during a visit to Stony Brook University, whose hospitality is greatly appreciated.

\end{document}